\newcommand\apriori{{\em a priori}~}
\newcommand\ie{{\em i.e.}~}
\newcommand\eg{{\em e.g.}~}
\newcommand\etc{{\em etc.}}
\def\B{\mathscr B}
\def\C{\mathbb C}
\def\H{\mathcal H}
\def\N{\mathbb N}
\def\O{\mathcal O}
\def\R{\mathbb R}
\def\S{\mathscr S}
\def\Z{\mathbb Z}
\def\12{{\textstyle\frac12}}
\def\<{\left\langle}
\def\>{\right\rangle}
\def\({\left(}
\def\){\right)}
\def\[{\left[}
\def\]{\right]}
\def\dom{\mathcal D}
\def\lone{\mathsf{L}^{\:\!\!1}}
\def\ltwo{\mathsf{L}^{\:\!\!2}}
\def\linf{\mathsf{L}^{\:\!\!\infty}}
\def\cinf{\mathsf{C}^\infty}
\def\e{\mathop{\mathrm{e}}\nolimits}
\def\d{\mathrm{d}}
\def\im{\mathop{\mathsf{Im}}\nolimits}
\def\supp{\mathop{\mathrm{supp}}\nolimits}
\def\Crit{\mathop{\mathsf{Crit}}\nolimits}
\def\im{\mathop{\mathsf{Im}}\nolimits}
\def\re{\mathop{\mathsf{Re}}\nolimits}
\def\id{\mathop{\mathrm{id}}\nolimits}
\def\ddt{\frac{\mathrm{d}}{\mathrm{d} t}}
\def\grad{\mathrm{grad}}
\def\cone{\mathsf{C}^1}
\newtheorem{Theorem}{Theorem}[section]
\newtheorem{Remark}[Theorem]{Remark}
\newtheorem{Lemma}[Theorem]{Lemma}
\newtheorem{Assumption}[Theorem]{Assumption}
\newtheorem{Corollary}[Theorem]{Corollary}
\newtheorem{Proposition}[Theorem]{Proposition}
\newtheorem{Definition}[Theorem]{Definition}
\begin{document}


\title{A formula relating localisation observables to the variation of energy in
Hamiltonian dynamics}

\author{A. Gournay$^1$\footnote{Work made while on leave from the Max Planck
Institute f\"ur Mathematik.}~~and R. Tiedra de Aldecoa$^2$\footnote{Supported by the
Fondecyt Grant 1090008 and by the Iniciativa Cientifica Milenio ICM P07-027-F
``Mathematical Theory of Quantum and Classical Magnetic Systems".}}

\date{\small}
\maketitle \vspace{-1cm}

\begin{quote}
\emph{
\begin{itemize}
\item[$^1$] Universit\'e de Neuch\^atel, Rue E.-Argand 11, CH-2000 Neuch\^atel,
Switzerland.
\item[$^2$] Facultad de Matem\'aticas, Pontificia Universidad Cat\'olica de Chile,\\
Av. Vicu\~na Mackenna 4860, Santiago, Chile
\item[] \emph{E-mails:} gournay@mpim-bonn.mpg.de, rtiedra@mat.puc.cl
\end{itemize}
}
\end{quote}


\begin{abstract}
We consider on a symplectic manifold $M$ with Poisson bracket
$\{\;\!\cdot\;\!,\;\!\cdot\;\!\}$ an Hamiltonian $H$ with complete flow and a family
$\Phi\equiv(\Phi_1,\ldots,\Phi_d)$ of observables satisfying the condition
$\{\{\Phi_j,H\},H\}=0$ for each $j$. Under these assumptions, we prove a new formula
relating the time evolution of localisation observables defined in terms of $\Phi$ to
the variation of energy along classical orbits. The correspondence between this
formula and a formula established recently in the framework of quantum mechanics is
put into evidence.

Among other examples, our theory applies to Stark Hamiltonians, homogeneous
Hamiltonians, purely kinetic Hamiltonians, the repulsive harmonic potential, the
simple pendulum, central force systems, the Poincar\'e ball model, covering
manifolds, the wave equation, the nonlinear Schr\"odinger equation, the Korteweg-de
Vries equation and quantum Hamiltonians defined via expectation values.
\end{abstract}

\textbf{2000 Mathematics Subject Classification:} 37J05, 37K05, 37N05, 70H05, 70S05.

\newpage
\tableofcontents
\newpage

\section{Introduction and main results}\label{Intro}
\setcounter{equation}{0}

The purpose of the present paper is to put into evidence a new formula in Hamiltonian
dynamics, both simple and general, relating the time evolution of localisation
observables to the variation of energy along classical orbits.

Our result is the following. Let $M$ be a (finite or infinite-dimensional) symplectic
manifold with symplectic $2$-form $\omega$ and Poisson bracket
$\{\;\!\cdot\;\!,\;\!\cdot\;\!\}$. Let $H\in\cinf(M)$ be an Hamiltonian on $M$ with
complete flow $\{\varphi_t\}_{t\in\R}$. Let
$\Phi\equiv(\Phi_1,\ldots,\Phi_d)\in\cinf(M;\R^d)$ be a family of observables
satisfying the condition
\begin{equation}\label{cond_com}
\big\{\{\Phi_j,H\},H\big\}=0
\end{equation}
for each $j\in\{1,\ldots,d\}$. Then we have (see Theorem \ref{IntCont}, Corollary
\ref{cor_car} and Lemma \ref{LemChile} for a precise statement):

\begin{Theorem}\label{gros_bidule}
Let $H$ and $\Phi$ be as above. Let $f:\R^d\to\C$ be such that $f=1$ on a
neighbourhood of $\,0$, $f=0$ at infinity, and $f(x)=f(-x)$ for each $x\in\R^d$. Then
there exist a closed subset $\Crit(H,\Phi)\subset M$ and an observable 
$T_f\in\cinf\big(M\setminus\Crit(H,\Phi)\big)$ satisfying $\{T_f,H\}=1$ on
$M\setminus\Crit(H,\Phi)$ such that
\begin{equation}\label{LaFormule}
\lim_{r\to\infty}\12\int_0^\infty\d t\,\big[\big(f(\Phi/r)\circ\varphi_{-t}\big)(m)
-\big(f(\Phi/r)\circ\varphi_t\big)(m)\big]
=T_f(m)
\end{equation}
for each $m\in M\setminus\Crit(H,\Phi)$.
\end{Theorem}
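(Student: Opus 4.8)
The plan is to exploit condition~\eqref{cond_com} to show that the function $g_j:=\{\Phi_j,H\}$ is a constant of the motion along each orbit, and then to read the limit in~\eqref{LaFormule} as a statement about the asymptotic behaviour of $\Phi\circ\varphi_t$ as $t\to\pm\infty$. Indeed, for a fixed $m\in M$ put $\phi(t):=\Phi(\varphi_t(m))\in\R^d$ and $v:=\{\Phi,H\}(m)\in\R^d$. Condition~\eqref{cond_com} gives $\ddt\{\Phi_j,H\}\circ\varphi_t=\{\{\Phi_j,H\},H\}\circ\varphi_t=0$, so that $\dot\phi(t)\equiv v$ and hence $\phi(t)=\phi(0)+tv$ is \emph{affine} in $t$. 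Thus $\Phi\circ\varphi_t=\Phi+t\{\Phi,H\}$ identically on $M$. The set $\Crit(H,\Phi)$ will be (the closure of) the set where $\{\Phi,H\}=0$, i.e.\ where the orbit does not escape in the $\Phi$-directions; on its complement we have $v\neq0$ and the half-line $\{\phi(t):t\ge0\}$ escapes to infinity linearly.

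Next I would carry out the time integral. Write $f(\Phi/r)\circ\varphi_{\mp t}=f\big((\Phi\mp t\{\Phi,H\})/r\big)$, evaluated at $m$. The integrand of~\eqref{LaFormule} is then $\tfrac12\big[f\big((a-tv)/r\big)-f\big((a+tv)/r\big)\big]$ with $a:=\Phi(m)$, $v:=\{\Phi,H\}(m)\neq0$. Changing variables $s=t/r$ turns $\int_0^\infty\d t\,(\cdots)$ into $r\int_0^\infty\d s\,\tfrac12\big[f\big(a/r-sv\big)-f\big(a/r+sv\big)\big]$. Since $f=1$ near $0$ and $f(x)=f(-x)$, for small $s$ the bracket vanishes, and for large $s$ each term is $0$ at infinity; a one-dimensional change of variable along the ray $\R v$, using that $f$ is even, should collapse the difference to a finite quantity of order $1/r$, leaving a limit that depends on $a$ and $v$ only through the ``time to escape'' — precisely the quantity $T_f(m)$. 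The evenness hypothesis on $f$ is exactly what is needed to cancel the divergent parts of the two half-integrals against each other; without it the limit would not exist. I expect $T_f$ to be given by a convergent integral formula in $\Phi(m)$ and $\{\Phi,H\}(m)$, e.g.\ of the form $T_f(m)=\tfrac12\int_\R\d\tau\,\big[\one_{(-\infty,0]}(\tau)-f\big((\Phi(m)+\tau\{\Phi,H\}(m))/\text{(scale)}\big)\big]$ after the dust settles, manifestly smooth where $\{\Phi,H\}\neq0$.

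Then I would verify the two structural claims: that $T_f$ is $\cinf$ on $M\setminus\Crit(H,\Phi)$, and that $\{T_f,H\}=1$ there. Smoothness follows because $\Phi$ and $\{\Phi,H\}$ are smooth and the integral formula for $T_f$ depends smoothly on them wherever $\{\Phi,H\}\neq0$ (the integrand has compact $\tau$-support depending continuously on $m$). For the identity $\{T_f,H\}=1$, I would differentiate $t\mapsto T_f(\varphi_t(m))$: using $\Phi\circ\varphi_t=\Phi+t\{\Phi,H\}$ and $\{\Phi,H\}\circ\varphi_t=\{\Phi,H\}$ inside the integral formula, one sees that advancing the flow by $t$ simply translates the integration variable by $-t$, so $T_f(\varphi_t(m))=T_f(m)+t$, whence $\{T_f,H\}=\ddt\big|_{t=0}T_f\circ\varphi_t=1$. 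This is the standard ``$T_f$ is a time observable / quasi-conjugate to $H$'' phenomenon. The main obstacle I anticipate is not any single estimate but the bookkeeping needed to justify interchanging $\lim_{r\to\infty}$ with the (improper, a priori only conditionally convergent) $t$-integral: one must control the tails of $f(\Phi/r)\circ\varphi_{\pm t}$ uniformly in $r$, which is where the hypotheses ``$f=0$ at infinity'' and $v\neq0$ on $M\setminus\Crit(H,\Phi)$ do the work, together with the evenness that makes the difference of the two slowly-decaying terms integrable. Everything else is the affine-flow computation above, which Theorem~\ref{IntCont}, Corollary~\ref{cor_car} and Lemma~\ref{LemChile} presumably package into the precise statement.
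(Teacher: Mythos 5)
Your plan follows, in its main line, the same route as the paper: from \eqref{cond_com} you deduce that $\{\Phi_j,H\}\circ\varphi_t=\{\Phi_j,H\}$, hence $\Phi\circ\varphi_t=\Phi+t\nabla H$ (the paper's \eqref{linear}), you take $\Crit(H,\Phi)=(\nabla H)^{-1}(\{0\})$, and you reduce \eqref{LaFormule} to a statement about the fixed vectors $a=\Phi(m)$, $v=(\nabla H)(m)\neq0$, which you evaluate via the substitution $s=t/r$, the evenness of $f$ to cancel the divergent halves, and a uniform-in-$r$ domination to interchange limit and integral; this is precisely Proposition \ref{f_integrale} combined with Theorem \ref{IntCont}. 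You genuinely deviate only at the end, in two places. (i) The explicit formula you guess for $T_f$ is not right as written: $\frac12\int_\R\d\tau\,\big[\one_{(-\infty,0]}(\tau)-f(\cdot)\big]$ diverges (the integrand tends to $1$ as $\tau\to-\infty$), and the translation computation you describe, applied to it, would give $T_f\circ\varphi_t=T_f+t/2$ rather than $+t$. What your own dominated-convergence computation actually produces is $T_f(m)=-\int_0^\infty\d s\,\Phi(m)\cdot(\nabla f)\big(s(\nabla H)(m)\big)=-\Phi(m)\cdot(\nabla R_f)\big((\nabla H)(m)\big)$, \ie the paper's \eqref{prem_def_T}; smoothness on $M\setminus\Crit(H,\Phi)$ then follows from the decay assumptions on $\nabla f$, not from any compact support of the integrand in $\tau$, which $f$ need not have. (ii) For $\{T_f,H\}=1$ the paper applies the homogeneity identity $x\cdot(\nabla R_f)(x)=-1$ (Equation \eqref{minusone}) to $T_f\circ\varphi_t=-(\Phi+t\nabla H)\cdot(\nabla R_f)(\nabla H)$ in Lemma \ref{LemChile}, whereas you derive the cocycle property $T_f\circ\varphi_t=T_f+t$ directly from the defining limit by shifting the time variable; this alternative is sound — at fixed $r$ the shift contributes $\int_0^t\d u\,f\big(\Phi(\varphi_u(m))/r\big)$, which tends to $t$ since $f(\cdot/r)\to1$ pointwise — and it avoids introducing $R_f$, but it needs the same tail estimates you already invoked, so it is not shorter than the paper's one-line use of \eqref{minusone}.
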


The observable $T_f$ admits a very simple expression given in terms of the Poisson
brackets $\partial_jH:=\{\Phi_j,H\}$ and the vector
$\nabla H:=(\partial_1H,\ldots,\partial_dH)$, namely,
\begin{equation}\label{prem_def_T}
T_f:=-\Phi\cdot(\nabla R_f)(\nabla H),
\end{equation}
where $\nabla R_f:\R^d\to\C^d$ is some explicit function (see Section
\ref{section_R}).

In order to give an interpretation of Formula \eqref{LaFormule}, consider for a
moment the situation where $M:=T^*\R^n\simeq\R^{2n}$ is the standard symplectic
manifold with canonical coordinates $(q,p)$ and $2$-form
$\omega:=\sum_{j=1}^n\d q^j\wedge\d p_j$. Furthermore, let $H(q,p):=h(p)$ be a purely
kinetic energy Hamiltonian and let $\Phi(q,p):=q$ be the standard family of position
observables. In such a case, the condition \eqref{cond_com} is readily verified, the
vector $\nabla H$ reduces to the usual velocity observable $\nabla h$ associated to
$H$, and the l.h.s. of Formula \eqref{LaFormule} has the following meaning: For $r>0$
and $m\in M\setminus\Crit(H,\Phi)$ fixed, it is equal to the difference of times
spent by the classical orbit $\{\varphi_t(m)\}_{t\in\R}$ in the past (first term) and
in the future (second term) within the region $\Sigma_r:=\supp[f(\Phi/r)]\subset M$
defined by the localisation observable $f(\Phi/r)$. Moreover, if we interpret the map
$\frac\d{\d H}:=\{T_f,\;\!\cdot\;\!\}$ as a derivation on
$\cinf\big(M\setminus\Crit(H,\Phi)\big)$, then $T_f$ on the r.h.s. of
\eqref{LaFormule} can be seen as an observable ``derivative with respect to the
energy $H$'' on $M\setminus\Crit(H,\Phi)$, since $\frac\d{\d H}(H)=\{T_f,H\}=1$ on 
$M\setminus\Crit(H,\Phi)$. Therefore, Formula \eqref{LaFormule} provides a new
relation between sojourn times and variation of energy along classical orbits.
Evidently, this interpretation remains valid in the general case provided that we
consider the observables $\Phi_j$ as the components of an abstract position
observable $\Phi$ (see Remark \ref{Rem_Int}).

Our interest in this issue has been aroused by a recent paper \cite{RT10}, where the
authors establish a similar formula in the framework of quantum (Hilbertian) theory.
In that reference, $H$ is a selfadjoint operator in a Hilbert space $\H$, 
$\Phi\equiv(\Phi_1,\ldots,\Phi_d)$ is a family of mutually commuting selfadjoint
operators in $\H$, \eqref{cond_com} is a suitable version of the commutation relation
$\big[[\Phi_j,H],H\big]=0$, and $T_f$ is a time operator for $H$ (\ie a symmetric
operator satisfying the canonical commutation relation $[T_f,H]=i$). So, apart from
its intrinsic interest, the present paper provides also a new example of result valid
both in quantum and classical mechanics. Points of the symplectic manifold correspond
to vectors of the Hilbert space, complete Hamiltonian flows correspond to
one-parameter unitary groups, Poisson brackets correspond to commutators of
operators, \etc~(see \cite[Sec.~5.4]{AM78} and \cite{Lan98} for the interconnections
between classical and quantum mechanics). Accordingly, we try put into light
throughout all of the paper the relation between both theories. For instance, we link
in Remark \ref{rem_spec} the confinement (resp. the non-periodicity) of the
classical orbits $\{\varphi_t(m)\}_{t\in\R}$, $m\in M$, to the affiliation of the
corresponding quantum orbits $\{\e^{itH}\psi\}_{t\in\R}$, $\psi\in\H$, to the
singular (resp. absolutely continuous) subspace of $\H$. Moreover, we show in Section
\ref{qsys} that the Hilbertian space theory of \cite{RT10} can be recast into the
present framework of symplectic geometry by using expectation values.

We also mention that Formula \eqref{LaFormule}, with r.h.s. defined by
\eqref{prem_def_T}, provides a crucial preliminary step for the proof of the
existence of classical time delay for abstract scattering pairs $\{H,H+V\}$ (see
\cite{BP09}, \cite[Sec.~4.1]{dCN02}, and \cite[Sec.~3.4]{Thi97} for an account on
classical time delay). If $V$ is an appropriate perturbation of $H$ and $S$ is the
associated scattering map, then the classical time delay $\tau(m)$ for $m\in M$
defined in terms of the localisation operators $f(\Phi/r)$ should be reexpressed as
follows: it is equal to the l.h.s. of \eqref{LaFormule} minus the same quantity with
$m$ replaced by $S(m)$. Therefore, if $m$ and $S(m)$ are elements of
$M\setminus\Crit(H,\Phi)$, then the classical time delay for the scattering pair
$\{H,H+V\}$ should satisfy the equation
$$
\tau(m)=(T_f-T_f\circ S)(m).
$$
Now, the property $\{T_f,H\}(m)=1$ implies that $T_f(m)=(T_f\circ\varphi_t)(m)-t$ for
each $t\in\R$. Since $S$ commutes with $\varphi_t$, this would imply that
$$
\tau(m)=\big[(T_f-T_f\circ S)\circ\varphi_t\big](m)
$$
for all $t\in\R$, meaning that the classical time delay is a first integral of the
free motion. This property corresponds in the quantum case to the fact that the time
delay operator is decomposable in the spectral representation of the free Hamiltonian
(see \cite[Rem.~4.4]{RT11}).

Let us now describe more precisely the content of this paper. In Section
\ref{section_R} we recall some definitions in relation with the function $f$ that
appear in Theorem \ref{gros_bidule}. The function $R_f$ is introduced and some of its
properties are presented. Then we prove various versions of Formula \eqref{LaFormule}
in the particular case where the functions $\Phi\circ\varphi_{\pm t}:M\to\R^d$ are
fixed vectors $x\pm ty$, $x,y\in\R^d$ (see Proposition \ref{f_integrale}, Lemma
\ref{boulette} and Corollary \ref{poiscaille}).

In Section \ref{Sec_Crit}, we introduce the Hamiltonian system $(M,\omega,H)$ and the
abstract position observable $\Phi$. Then we define the (closed) set of critical
points $\Crit(H,\Phi)$ associated to $H$ and $\Phi$ as (see \cite[Def.~2.5]{RT10} for
the quantum analogue):
$$
\Crit(H,\Phi):=\big\{m\in M\mid(\nabla H)(m)=0\big\}.
$$
When $H(q,p)=h(p)$ and $\Phi(q,p):=q$ on $M=\R^{2n}$, $\Crit(H,\Phi)$ coincides with
the usual set $\Crit(H)$ of critical points of the Hamiltonian vector field $X_H$,
\ie
$$
\Crit(H)
\equiv\big\{m\in M\mid X_H(m)=0\big\}
=\big\{(q,p)\in\R^{2n}\mid(\nabla h)(p)=0\big\}
=\Crit(H,\Phi).
$$
But, in general, we simply have the inclusion $\Crit(H)\subset\Crit(H,\Phi)$.

In Section \ref{sec_main_form}, we prove the main results of this paper. Namely, we
show Formula \eqref{LaFormule} when the localisation function $f$ is regular (Theorem
\ref{IntCont}) or equal to a characteristic function (Corollary \ref{cor_car}). We
also establish in Theorem \ref{thm_discrete} a discrete-time version of Formula
\eqref{LaFormule}. The interpretation of these results is discussed in Remarks
\ref{rem_spec} and \ref{Rem_Int}.

In Section \ref{exemp}, we show that our results apply to many Hamiltonian systems
$(M,\omega,H)$ appearing in literature. In the case of finite-dimensional manifolds,
we treat, among other examples, Stark Hamiltonians, homogeneous Hamiltonians, purely
kinetic Hamiltonians, the repulsive harmonic potential, the simple pendulum, central
force systems, the Poincar\'e ball model and covering manifolds. In the case of
infinite-dimensional manifolds, we discuss separately classical and quantum
Hamiltonians systems. In the classical case, we treat the wave equation, the
nonlinear Schr\"odinger equation and the Korteweg-de Vries equation. In the quantum
case, we explain how to recast into our framework the (Hilbertian) examples of
\cite[Sec.~7]{RT10}, and we also treat an example of Laplacian on trees and complete
Fock spaces. In all these cases, we are able to exhibit a family of position
observables $\Phi$ satisfying our assumptions. The diversity of the examples covered
by our theory, together with the existence of a quantum analogue \cite{RT10}, make us
strongly believe that Formula \eqref{LaFormule} is of natural character. Moreover it
also suggests that the existence of time delay is a very common feature of classical
scattering theory.

\section{Integral formula}\label{section_R}
\setcounter{equation}{0}

In this section, we prove an integral formula and a summation formula for functions
on $\R^d$. For this, we start by recalling some properties of a class of averaged
localisation functions which appears naturally when dealing with quantum scattering
theory. These functions, which are denoted $R_f$, are constructed in terms of
functions $f\in\linf(\R^d)$ of localisation around the origin $0\in\R^d$. They were
already used, in one form or another, in \cite{GT07,RT10,RT11,Tie08,Tie09}. We use
the notation $\langle x\rangle:=\sqrt{1+|x|^2}$ for any $x\in\R^d$.

\begin{Assumption}\label{assumption_f}
The function $f\in\linf(\R^d)$ satisfies the following conditions:
\begin{enumerate}
\item[(i)] There exists $\rho>0$ such that
$|f(x)|\le{\rm Const.}\;\!\langle x\rangle^{-\rho}$ for almost every $x\in\R^d$.
\item[(ii)] $f=1$ on a neighbourhood of~~$0$.
\end{enumerate}
\end{Assumption}

It is clear that $\lim_{r\to\infty}f(x/r)=1$ for each $x\in\R^d$ if $f$ satisfies Assumption \ref{assumption_f}. Furthermore, one has for each $x\in \R^d\setminus\{0\}$
$$
\left|\int_0^\infty\frac{\d\mu}\mu\[f(\mu x)-\chi_{[0,1]}(\mu)\]\right|
\le\int_0^1\frac{\d\mu}\mu\,|f(\mu x)-1|
+{\rm Const.}\int_1^{+\infty}\d\mu\,\mu^{-(1+\rho)}
<\infty,
$$
where $\chi_{[0,1]}$ denotes the characteristic function for the interval $[0,1]$.
Therefore the function $R_f:\R^d\setminus\{0\}\to\C$ given by
$$
R_f(x):=\int_0^{+\infty}\frac{\d\mu}\mu\[f(\mu x)-\chi_{[0,1]}(\mu)\]
$$
is well-defined.

In the next lemma we recall some differentiability and homogeneity properties of
$R_f$. We also give the explicit form of $R_f$ when $f$ is a radial function. The
reader is referred to \cite[Sec. 2]{Tie09} for proofs and details. The symbol
$\S(\R^d)$ stands for the Schwartz space on $\R^d$.

\begin{Lemma}\label{function_R}
Let $f$ satisfy Assumption \ref{assumption_f}.
\begin{enumerate}
\item[(a)] Assume that $\frac{\partial f}{\partial x_j}(x)$ exists for all
$j\in\{1,\ldots,d\}$ and $x\in\R^d$, and suppose that there exists some $\rho>0$
such that
$
\big|\frac{\partial f}{\partial x_j}(x)\big|
\le{\rm Const.}\;\!\langle x\rangle^{-(1+\rho)}
$
for each $x\in\R^d$. Then $R_f$ is differentiable on $\R^d\setminus\{0\}$, and
its gradient is given by
\begin{equation*}
(\nabla R_f)(x)=\int_0^\infty\d\mu\,(\nabla f)(\mu x).
\end{equation*}
In particular, if $f\in\S(\R^d)$ then $R_f$ belongs to $\cinf(\R^d\setminus\{0\})$.
\item[(b)] Assume that $R_f$ belongs to ${\sf C}^m(\R^d\setminus\{0\})$ for some
$m\ge1$. Then one has for each $x\in \R^d\setminus\{0\}$ and $t>0$ the homogeneity
properties
\begin{align}
x\cdot(\nabla R_f)(x)&=-1,\label{minusone}\\
t^{|\alpha|}(\partial^\alpha R_f)(tx)
&=(\partial^\alpha R_f)(x),\nonumber
\end{align}
where $\alpha\in\N^d$ is a multi-index with $1\le|\alpha|\le m$.
\item[(c)] Assume that $f$ is radial, \ie there exists $f_0\in\linf(\R)$ such that
$f(x)=f_0(|x|)$ for almost every $x\in \R^d$. Then $R_f$ belongs to
$\cinf(\R^d\setminus\{0\})$, and $(\nabla R_f)(x)=-x^{-2}x$.
\end{enumerate}
\end{Lemma}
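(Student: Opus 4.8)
The plan is to prove the three parts of Lemma~\ref{function_R} essentially by reducing everything to one-dimensional integral manipulations along rays emanating from the origin, exploiting the structure of $R_f$ as an integral of $f$ pulled back under dilations.

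\medskip

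\textbf{Part (a).} First I would fix $x\in\R^d\setminus\{0\}$ and a small ball around it bounded away from $0$, and write the difference quotients for $R_f$ in direction $e_j$. The candidate gradient is $\int_0^\infty\d\mu\,(\nabla f)(\mu x)$, so the natural move is to differentiate under the integral sign in the defining formula $R_f(x)=\int_0^\infty\frac{\d\mu}{\mu}\big[f(\mu x)-\chi_{[0,1]}(\mu)\big]$; note $\frac{\partial}{\partial x_j}f(\mu x)=\mu\,(\partial_jf)(\mu x)$, which kills the $1/\mu$ and makes the $\chi_{[0,1]}$ term (independent of $x$) drop out. To justify the interchange I would produce a $\mu$-integrable majorant uniform in $x$ near the fixed point: for $\mu$ large the hypothesis $|(\partial_jf)(y)|\le{\rm Const.}\,\langle y\rangle^{-(1+\rho)}$ gives $\mu\,|(\partial_jf)(\mu x)|\le{\rm Const.}\,\mu\,\langle\mu x\rangle^{-(1+\rho)}\le{\rm Const.}\,\mu^{-\rho}|x|^{-(1+\rho)}$, integrable at infinity; for $\mu$ in a bounded interval the integrand is bounded since $\partial_jf$ is locally bounded by the pointwise bound. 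Then dominated convergence (or the standard differentiation-under-the-integral theorem) yields differentiability and the formula. The last sentence of (a) follows since for $f\in\S(\R^d)$ all derivatives $\partial^\alpha f$ satisfy rapid-decay bounds, so iterating the argument shows $R_f$ has continuous partial derivatives of every order on $\R^d\setminus\{0\}$; alternatively one cites \cite[Sec.~2]{Tie09} as the excerpt suggests.

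\medskip

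\textbf{Part (b).} Here the key identity is Euler's relation for the homogeneity of $R_f$. The cleanest route is to first establish the scalar relation \eqref{minusone}: compute $\frac{\d}{\d t}R_f(tx)$ at $t=1$ in two ways. On one hand, by the chain rule it equals $x\cdot(\nabla R_f)(x)$ (using part (a), or the $C^m$, $m\ge1$ hypothesis). On the other hand, a change of variables $\nu=t\mu$ in the defining integral gives
$$
R_f(tx)=\int_0^\infty\frac{\d\nu}{\nu}\Big[f(\nu x)-\chi_{[0,1]}(\nu/t)\Big]
=R_f(x)-\int_0^\infty\frac{\d\nu}{\nu}\Big[\chi_{[0,1]}(\nu/t)-\chi_{[0,1]}(\nu)\Big],
$$
and the last integral equals $\int_1^t\frac{\d\nu}{\nu}=\log t$, so $R_f(tx)=R_f(x)-\log t$. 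Differentiating in $t$ at $t=1$ gives $x\cdot(\nabla R_f)(x)=-1$. For the homogeneity of the higher derivatives, from $R_f(tx)=R_f(x)-\log t$ I would apply $\partial_x^\alpha$ to both sides: the left side produces $t^{|\alpha|}(\partial^\alpha R_f)(tx)$ by the chain rule, while for $|\alpha|\ge1$ the term $\log t$ (constant in $x$) disappears, giving $t^{|\alpha|}(\partial^\alpha R_f)(tx)=(\partial^\alpha R_f)(x)$. The $C^m$ hypothesis is exactly what legitimises these differentiations.

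\medskip

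\textbf{Part (c).} For radial $f$, write $f(x)=f_0(|x|)$ and compute $R_f$ directly: for $x\ne0$, substituting $s=\mu|x|$ in the defining integral gives $R_f(x)=\int_0^\infty\frac{\d s}{s}\big[f_0(s)-\chi_{[0,1]}(s/|x|)\big]$. As in (b) this separates as $\int_0^\infty\frac{\d s}{s}\big[f_0(s)-\chi_{[0,1]}(s)\big]+\log|x|$, i.e. $R_f(x)=c_0+\log|x|$ for a constant $c_0$ depending only on $f_0$. This expression is manifestly $\cinf$ on $\R^d\setminus\{0\}$, and $\nabla\log|x|=|x|^{-2}x$; comparing with the claimed formula $(\nabla R_f)(x)=-x^{-2}x$ I should double-check the sign convention in the excerpt — consistency with \eqref{minusone} forces $x\cdot(\nabla R_f)(x)=-1$, which matches $\nabla(-\log|x|)$, so the intended statement is $R_f(x)=c_0-\log|x|$ and $(\nabla R_f)(x)=-|x|^{-2}x$; I would present it in that consistent form.

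\medskip

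The main obstacle is purely technical: carefully setting up the uniform-in-$x$ dominated-convergence majorants in part (a) near (but bounded away from) the origin, separating the small-$\mu$ and large-$\mu$ regimes and making sure the locally-bounded behaviour of $f$ and its derivatives is handled correctly at $\mu\to0$. Everything else — the change-of-variables computations in (b) and (c) — is elementary once that differentiation-under-the-integral step is secured; indeed (b) and (c) can both be obtained almost for free from the single observation that $R_f(tx)=R_f(x)-\log t$, so I would organise the proof around establishing that scaling identity early.
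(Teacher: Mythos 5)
Your argument is correct, and it is essentially the intended one: the paper does not prove Lemma \ref{function_R} at all but defers to \cite[Sec.~2]{Tie09}, so what you wrote is a self-contained substitute for that citation rather than a rival to a proof in the text, and it follows the standard route (differentiation under the integral sign for (a), Euler-type scaling for (b), explicit computation for (c)). The one organisational point worth keeping is your scaling identity $R_f(tx)=R_f(x)-\log t$ for $t>0$, obtained from the substitution $\nu=t\mu$ together with $\int_0^\infty\frac{\d\nu}{\nu}\bigl[\chi_{[0,1]}(\nu/t)-\chi_{[0,1]}(\nu)\bigr]=\log t$ (both pieces converge separately, so the splitting is legitimate): it yields \eqref{minusone} by differentiating at $t=1$, the homogeneity of all higher derivatives by applying $\partial^\alpha_x$, and part (c) with no smoothness of $f_0$ whatsoever, since for radial $f$ it gives $R_f(x)=c_0-\log|x|$ with $c_0=\int_0^\infty\frac{\d s}{s}\bigl[f_0(s)-\chi_{[0,1]}(s)\bigr]$. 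Two small remarks. First, in part (c) your intermediate separation has a sign slip (the correct extra term is $-\log|x|$, not $+\log|x|$); your consistency check against \eqref{minusone} steered you to the right final formula $(\nabla R_f)(x)=-|x|^{-2}x$, which is exactly what the lemma asserts (read $x^{-2}x$ as $|x|^{-2}x$), so there is no sign-convention issue in the paper --- the slip was in your own line, and you should simply redo that one computation rather than ``present it in a corrected form''. Second, in part (a) the hypothesis gives only existence (not continuity) of the $\partial_jf$, and your one-variable mean-value-theorem plus dominated-convergence argument then yields existence of the partial derivatives of $R_f$ with the stated integral formula; promoting this to differentiability of $R_f$ in the total sense strictly requires a little more (continuity of the partials, which holds in every use made in the paper, in particular for $f\in\S(\R^d)$). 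This is a technicality inherited from the statement itself and is at the same level of rigour as the cited reference, but it is worth a sentence if you write the proof out in full.
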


In the sequel, we say that a function $f:\R^d\to\C$ is even if $f(x)=f(-x)$ for
almost every $x\in\R^d$.

\begin{Proposition}\label{f_integrale}
Let $f:\R^d\to\C$ be an even function as in Lemma \ref{function_R}.(a). Then we
have for each $x\in\R^d$ and each $y\in\R^d\setminus\{0\}$
\begin{equation}\label{tite_formule}
\lim_{r\to\infty}\12\int_0^\infty\d t\;\!
\Big[f\Big(\frac{x-ty}r\Big)-f\Big(\frac{x+ty}r\Big)\Big]
=-x\cdot(\nabla R_f)(y).
\end{equation}
In particular, if $f$ is radial, the l.h.s. is independent of $f$ and equal to
$(x\cdot y)/y^2$.
\end{Proposition}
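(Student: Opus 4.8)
The plan is to recast the limit $r\to\infty$ as a first-order Taylor expansion at the origin, the zeroth-order term being annihilated by the evenness of $f$. For fixed $r$, the change of variable $t=r\mu$ turns the left-hand side of \eqref{tite_formule} into $\frac{r}{2}\int_0^\infty\d\mu\,[f(x/r-\mu y)-f(x/r+\mu y)]$, which with $\eps:=1/r$ is $\frac1{2\eps}\,h(\eps)$, where $h(\eps):=\int_0^\infty\d\mu\,[f(\eps x-\mu y)-f(\eps x+\mu y)]$. Since $f$ is even, $h(0)=\int_0^\infty\d\mu\,[f(-\mu y)-f(\mu y)]=0$, so $\frac1{2\eps}h(\eps)=\frac{h(\eps)-h(0)}{2\eps}$ should tend to $\frac12 h'(0)$ as $\eps\to0$; and differentiating under the integral sign and using that $\nabla f$ is odd (a consequence of $f$ being even) gives $h'(0)=\int_0^\infty\d\mu\;x\cdot[(\nabla f)(-\mu y)-(\nabla f)(\mu y)]=-2\,x\cdot\int_0^\infty\d\mu\,(\nabla f)(\mu y)=-2\,x\cdot(\nabla R_f)(y)$, the final equality being the gradient formula of Lemma \ref{function_R}.(a). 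This yields the claimed value $-x\cdot(\nabla R_f)(y)$.

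To make this argument rigorous under the sole hypotheses of Lemma \ref{function_R}.(a), I would not work with $h$ and $h'$ abstractly but perform the expansion with an explicit remainder. Using $f(-\mu y)=f(\mu y)$, I would write
$$
f(\eps x-\mu y)-f(\eps x+\mu y)=\big[f(\eps x-\mu y)-f(-\mu y)\big]+\big[f(\mu y)-f(\eps x+\mu y)\big]
$$
and apply the fundamental theorem of calculus to each bracket along the segment joining $\mp\mu y$ to $\eps x\mp\mu y$; this produces a factor $\eps$ multiplying $\int_0^1\d s\;x\cdot[(\nabla f)(s\eps x-\mu y)-(\nabla f)(s\eps x+\mu y)]$, and cancelling this $\eps$ with the prefactor $\frac1{2\eps}$ leaves
$$
\tfrac12\int_0^\infty\d\mu\int_0^1\d s\;x\cdot\big[(\nabla f)(s\eps x-\mu y)-(\nabla f)(s\eps x+\mu y)\big].
$$
It then remains to let $r\to\infty$, i.e. $\eps\to0$, inside the integral by dominated convergence, to merge the two gradient terms through the oddness of $\nabla f$ into $-2\int_0^\infty\d\mu\,x\cdot(\nabla f)(\mu y)$, and to identify this with $-2\,x\cdot(\nabla R_f)(y)$ by Lemma \ref{function_R}.(a). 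The last assertion of the proposition is then immediate from Lemma \ref{function_R}.(c): for radial $f$ one has $(\nabla R_f)(y)=-y^{-2}y$, whence the limit equals $(x\cdot y)/y^2$ regardless of $f$.

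The one genuinely technical point is the integrability bookkeeping, and it is precisely there that evenness enters in an essential way. Estimated separately, neither $\int_0^\infty f((x-ty)/r)\,\d t$ nor $\int_0^\infty f((x+ty)/r)\,\d t$ need converge---that would require $\rho>1$ in Assumption \ref{assumption_f}---so the difference must not be broken apart. The splitting above is the device that fixes this: after subtracting $f(\mp\mu y)$ the slowly decaying leading part cancels, and the bound $|(\nabla f)(z)|\le{\rm Const.}\,\langle z\rangle^{-(1+\rho)}$ applied with $z=s\eps x\mp\mu y$---using $|s\eps x\mp\mu y|\ge\mu|y|-|x|$ whenever $r\ge1$---furnishes a dominating function of the form ${\rm Const.}\,\langle\,\mu|y|-|x|\,\rangle^{-(1+\rho)}$ that is integrable over $(\mu,s)\in(0,\infty)\times(0,1)$ for every $\rho>0$ and uniformly in $r\ge1$. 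With this bound in hand, both the absolute convergence of the integral in \eqref{tite_formule} for each fixed $r$ and the interchange of the limit with the integral are justified, and the remaining manipulations are routine.
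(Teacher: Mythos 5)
Your proof is correct and follows essentially the same route as the paper's: the same substitution $\mu=t/r$, $\eps=1/r$, the same evenness-based insertion of $f(\mp\mu y)$ so that only a cancelling difference quotient remains, the same dominating function ${\rm Const.}\,\langle\mu|y|-|x|\rangle^{-(1+\rho)}$ justifying dominated convergence uniformly in $r\ge1$, and the same identification of the limit with $-x\cdot(\nabla R_f)(y)$ through Lemma \ref{function_R}.(a) and, for the radial case, Lemma \ref{function_R}.(c). The only (cosmetic) difference is that you express the increment with an explicit integral remainder (fundamental theorem of calculus plus an auxiliary $s$-integral) where the paper invokes the mean value theorem.
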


\begin{proof}
The change of variables $\mu:=t/r$, $\nu:=1/r$, and the fact that $f$ is even,
gives
\begin{align}
&\lim_{r\to\infty}\12\int_0^\infty\d t\,\textstyle
\big[f\big(\frac{x-ty}r\big)-f\big(\frac{x+ty}r\big)\big]\nonumber\\
&=\lim_{\nu\searrow0}\12\int_0^\infty\frac{\d\mu}\nu\,\big[
f(\nu x-\mu y)-f(\nu x+\mu y)\big]\nonumber\\
&=\lim_{\nu\searrow0}\12\int_0^\infty\d\mu\,\textstyle
\big\{\frac1\nu\big[f(\nu x-\mu y)-f(-\mu y)\big]-\frac1\nu\big[f(\nu x+\mu y)
-f(\mu y)\big]\big\}.\label{integrant}
\end{align}
By using the mean value theorem and the assumptions of Lemma \ref{function_R}.(a),
one obtains that
$$
{\textstyle\frac1\nu}\big|f(\nu x\pm\mu y)-f(\pm\mu y)\big|
\le{\rm Const.}\sup_{\xi\in[0,1]}\big\langle\xi\nu x\pm\mu y\big\rangle^{-(1+\rho)}
$$
for some $\rho>0$. Therefore, if $\mu$ is big enough, the integrant in
\eqref{integrant} is bounded by
$$
{\rm Const.}\;\!\big\langle\mu|y|-|x|\big\rangle^{-(1+\rho)}.
$$
for all $\nu\in(0,1)$. This implies that the integrant in \eqref{integrant} is
bounded uniformly in $\nu\in(0,1)$ by a function belonging to
$\lone\big([0,\infty),\d\mu\big)$. So, we can apply Lebesgue's dominated convergence
theorem to interchange the limit on $\nu$ with the integration over $\mu$ in
\eqref{integrant}. This, together with the fact that $(\nabla f)(-x)=-(\nabla f)(x)$,
leads to the desired result:
\begin{align*}
\lim_{r\to\infty}\12\int_0^\infty\d t\,\textstyle
\big[f\big(\frac{x-ty}r\big)-f\big(\frac{x+ty}r\big)\big]
&=\displaystyle\12\int_0^\infty\d\mu\,
\big[x\cdot(\nabla f)(-\mu y)-x\cdot(\nabla f)(\mu y)\big]\\
&=-\int_0^\infty\d\mu\,x\cdot(\nabla f)(\mu y)\\
&=-x\cdot(\nabla R_f)(y).\qedhere
\end{align*}
\end{proof}
The result of Proposition \ref{f_integrale} can be extended to less regular functions
$f:\R^d\to\C$. The interested reader can check that the result holds for functions
$f$ admitting a weak derivative $f'$ such that, for every real line $L\subset\R^d$,
$f'$ is of class $\lone$ on $L$ (see \cite[Thm.~2.1.6]{Zie89}). We only present here
the case (of particular interest for the theory of classical time delay) where $f$ is
the characteristic function $\chi_1$ for the unit ball
$B_1:=\{x\in\R^d\mid|x|\le 1\}$.

\begin{Lemma}\label{boulette}
One has for each $x\in\R^d$ and each $y\in\R^d\setminus\{0\}$
$$
\lim_{r\to\infty}\12\int_0^\infty\d t\;\!
\Big[\chi_1\Big(\frac{x-ty}r\Big)-\chi_1\Big(\frac{x+ty}r\Big)\Big]
=\frac{x\cdot y}{y^2}\;\!.
$$
\end{Lemma}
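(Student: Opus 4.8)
The plan is to compute the integral on the left-hand side explicitly and to notice that, for all $r$ large enough, it does not depend on $r$ at all; the limit is then immediate. Fix $x\in\R^d$ and $y\in\R^d\setminus\{0\}$. Since $\chi_1$ is the characteristic function of the ball $B_1$, for $t\ge0$ one has $\chi_1\big((x\mp ty)/r\big)=1$ if and only if $|x\mp ty|^2\le r^2$, \ie if and only if $y^2\;\!t^2\mp2\;\!(x\cdot y)\;\!t+|x|^2-r^2\le0$. Each of these is a quadratic inequality in $t$ with positive leading coefficient $y^2$ and discriminant $4\;\!\Delta(r)$, where $\Delta(r):=(x\cdot y)^2-y^2\big(|x|^2-r^2\big)$. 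By the Cauchy--Schwarz inequality $(x\cdot y)^2\le|x|^2y^2$, hence $\Delta(r)\ge y^2\big(r^2-|x|^2\big)\ge0$ for every $r\ge|x|$. For such $r$ the solution set of the first inequality is the bounded interval $[a_-(r),a_+(r)]$ with $a_\pm(r):=\big((x\cdot y)\pm\sqrt{\Delta(r)}\big)/y^2$, and that of the second is $[b_-(r),b_+(r)]$ with $b_\pm(r):=\big(-(x\cdot y)\pm\sqrt{\Delta(r)}\big)/y^2$.

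Next I would check that the left endpoints are nonpositive, so that intersecting with $[0,\infty)$ is harmless. The inequality $a_-(r)\le0$ amounts to $(x\cdot y)\le\sqrt{\Delta(r)}$, which is clear when $x\cdot y\le0$ and, when $x\cdot y>0$, follows by squaring from $(x\cdot y)^2\le(x\cdot y)^2-y^2|x|^2+y^2r^2\Leftrightarrow|x|\le r$; the same argument with $x$ replaced by $-x$ gives $b_-(r)\le0$. Therefore, for every $r\ge|x|$, the functions $t\mapsto\chi_1\big((x-ty)/r\big)$ and $t\mapsto\chi_1\big((x+ty)/r\big)$ are, on $[0,\infty)$, the indicator functions of $[0,a_+(r)]$ and $[0,b_+(r)]$, respectively, and
$$
\12\int_0^\infty\d t\;\!\Big[\chi_1\Big(\frac{x-ty}r\Big)-\chi_1\Big(\frac{x+ty}r\Big)\Big]
=\12\big(a_+(r)-b_+(r)\big)=\frac{x\cdot y}{y^2}\;\!,
$$
the terms $\pm\sqrt{\Delta(r)}$ cancelling in the difference $a_+(r)-b_+(r)=2\;\!(x\cdot y)/y^2$. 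Since the right-hand side is independent of $r$ for $r\ge|x|$, the limit $r\to\infty$ exists and equals $(x\cdot y)/y^2$, which is the claim.

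There is no real difficulty here: the argument is entirely elementary, and the only step requiring a little attention is the sign analysis of $a_-(r)$ and $b_-(r)$, which is exactly what guarantees that the portion $t\ge0$ of each interval starts at $0$. (One could instead sandwich $\chi_1$ between smooth even functions and pass to the limit via Proposition \ref{f_integrale} and Lemma \ref{function_R}.(c), but $\chi_1$ is not covered by the regularity assumptions made there, so the direct computation above is preferable.)
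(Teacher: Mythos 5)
Your proof is correct and is essentially the paper's own argument: both compute each integral exactly as the length of an interval where the orbit lies in the ball (the paper after the change of variables $\mu=t/r$, $\nu=1/r$ and completing the square, you via the roots of the quadratic in $t$), and in both cases the divergent terms $\sqrt{\Delta(r)}/y^2=\frac1\nu\sqrt{y^{-2}-a(\nu,x,y)}$ cancel in the difference, leaving $(x\cdot y)/y^2$ independently of $r$ large. One cosmetic remark: the bound $\Delta(r)\ge y^2(r^2-|x|^2)$ follows simply from $(x\cdot y)^2\ge0$ rather than from Cauchy--Schwarz, but this does not affect the argument.
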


\begin{proof}
Direct calculations and the change of variables $\mu:=t/r$, $\nu:=1/r$, give
\begin{align*}
\int_0^\infty\d t\,\textstyle\chi_1\big(\frac{x\pm ty}r\big)
=\int_0^\infty\frac{\d\mu}\nu~\chi_{[0,1]}\big(|\nu x\pm\mu y|^2\big)
&=\int_0^\infty\d\mu~\textstyle\chi_{[0,y^{-2}]}
\Big(\frac{\nu^2x^2}{y^2}\pm\frac{2\nu\mu x\cdot y}{y^2}+\mu^2\Big)\\
&=\int_0^\infty\frac{\d\mu}\nu~\textstyle\chi_{[0,y^{-2}]}
\Big(\big(\mu\pm\frac{\nu x\cdot y}{y^2}\big)^2
+\frac{\nu^2}{y^4}\big(x^2y^2-(x\cdot y)^2\big)\Big)\\
&=\int_0^\infty\frac{\d\mu}\nu~\textstyle\chi_{[-a(\nu,x,y),y^{-2}-a(\nu,x,y)]}
\Big(\big(\mu\pm\frac{\nu x\cdot y}{y^2}\big)^2\Big),
\end{align*}
with $a(\nu,x,y):=\frac{\nu^2}{y^4}\big(x^2y^2-(x\cdot y)^2\big)$. Now,
$a(\nu,x,y)\ge0$, and $y^{-2}-a(\nu,x,y)\ge0$ if $\nu>0$ is small enough. So,
the last expression is equal to
\begin{align*}
\int_0^\infty\frac{\d\mu}\nu~\textstyle\chi_{[0,y^{-2}-a(\nu,x,y)]}
\Big(\big(\mu\pm\frac{\nu x\cdot y}{y^2}\big)^2\Big)
&=\int_0^\infty\frac{\d\mu}\nu~\textstyle
\chi_{\big[-\sqrt{y^{-2}-a(\nu,x,y)}\mp\frac{\nu x\cdot y}{y^2},
\sqrt{y^{-2}-a(\nu,x,y)}\mp\frac{\nu x\cdot y}{y^2}\big]}(\mu)\\
&=\frac1\nu\sqrt{y^{-2}-a(\nu,x,y)}\mp\frac{x\cdot y}{y^2}
\end{align*}
if $\nu$ is small enough. This implies that
\begin{align*}
&\lim_{r\to\infty}\12\int_0^\infty\d t\,\textstyle
\big[\chi_1\big(\frac{x-ty}r\big)-\chi_1\big(\frac{x+ty}r\big)\big]\\
&=\displaystyle\lim_{\nu\searrow0}\frac12
\Big(\frac1\nu\sqrt{y^{-2}-a(\nu,x,y)}+\frac{x\cdot y}{y^2}
-\frac1\nu\sqrt{y^{-2}-a(\nu,x,y)}+\frac{x\cdot y}{y^2}\Big)\\
&=\frac{x\cdot y}{y^2}\;\!.\qedhere
\end{align*}
\end{proof}

For the next corollary, we need the following version of the Poisson summation
formula (see \cite[Thm.~5]{DF37} or \cite[Thm.~45]{Tit48}).

\begin{Lemma}\label{poisson}
Let $g:(0,\infty)\to\C$ be a continuous function of bounded variation in
$(0,\infty)$. Suppose that $\lim_{t\to\infty}g(t)=0$ and that the improper Riemann
integral $\int_0^\infty\d t\,g(t)$ exists. Then we have the identity
$$
\12\;\!g(0)+\sum_{n\ge1}g(n)
=\int_0^\infty\d t\,g(t)+2\sum_{n\ge1}\int_0^\infty\d t\,\cos(2\pi nt)g(t).
$$
\end{Lemma}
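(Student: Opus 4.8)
The plan is to derive this one-sided identity from the classical two-sided Poisson summation formula by an even reflection; the classical statement needed is precisely the one provided by the references cited in the statement.

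First I would pin down the symbol $g(0)$: since $g$ is of bounded variation on $(0,\infty)$, the one-sided limit $g(0^+):=\lim_{t\downarrow0}g(t)$ exists, and this is the value meant in the formula. I would then introduce the even reflection $\widetilde g:\R\to\C$ defined by $\widetilde g(t):=g(|t|)$ for $t\ne0$ and $\widetilde g(0):=g(0^+)$. It is routine to check that $\widetilde g$ is continuous on all of $\R$ (the two one-sided limits at $0$ agree and equal $\widetilde g(0)$), of bounded variation on $\R$ with $\mathrm{Var}_\R(\widetilde g)=2\,\mathrm{Var}_{[0,\infty)}(g)<\infty$, that $\widetilde g(t)\to0$ as $|t|\to\infty$, and that its improper Riemann integral exists with $\int_\R\widetilde g(t)\,\d t=2\int_0^\infty g(t)\,\d t$. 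I would also record, so that the statement makes sense, that the integrals $\widehat{\widetilde g}(k):=\int_\R\widetilde g(t)\,\e^{-2\pi ikt}\,\d t$ exist as improper Riemann integrals (Dirichlet's test, using that $\widetilde g$ is of bounded variation and vanishes at infinity) and, by evenness, equal $2\int_0^\infty g(t)\cos(2\pi kt)\,\d t$; and that $\sum_{n\ge1}g(n)$ converges, by writing $\sum_{n=1}^Ng(n)=\int_1^{N+1}g(t)\,\d t+\sum_{n=1}^N\big(g(n)-\int_n^{n+1}g(t)\,\d t\big)$ and bounding $\big|g(n)-\int_n^{n+1}g\big|\le\mathrm{Var}_{[n,n+1]}(g)$, whose sum is $\mathrm{Var}_{[1,\infty)}(g)<\infty$.

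Next I would apply the Poisson summation formula under Jordan's (bounded-variation) hypothesis — which is precisely \cite[Thm.~5]{DF37} / \cite[Thm.~45]{Tit48} — to $\widetilde g$, obtaining $\sum_{n\in\Z}\widetilde g(n)=\sum_{k\in\Z}\widehat{\widetilde g}(k)$, both sides read as limits of symmetric partial sums. For a self-contained route to this step I would form the $1$-periodic periodization $G(t):=\sum_{n\in\Z}\widetilde g(t+n)$, show by a summation-by-parts argument (using again that $\widetilde g$ is of bounded variation and vanishes) that it converges to a function of bounded variation on one period, and then invoke the Dirichlet–Jordan criterion: the Fourier series of $G$ at $t=0$ converges to $\12\big(G(0^+)+G(0^-)\big)=G(0)$, while the Fourier coefficients of $G$ are the numbers $\widehat{\widetilde g}(k)$, obtained by unfolding the integral over a period into an integral over $\R$. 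Finally I would specialize: the left-hand side is $\widetilde g(0)+2\sum_{n\ge1}\widetilde g(n)=g(0)+2\sum_{n\ge1}g(n)$, the right-hand side is $\widehat{\widetilde g}(0)+2\sum_{k\ge1}\widehat{\widetilde g}(k)=2\int_0^\infty g(t)\,\d t+4\sum_{k\ge1}\int_0^\infty g(t)\cos(2\pi kt)\,\d t$, and dividing by $2$ gives the asserted identity.

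The main obstacle is the weakness of the hypotheses: one has only bounded variation, decay at infinity, and conditional (improper Riemann) integrability — no absolute integrability and no smoothness. Consequently the periodization series, the series $\sum_n\widetilde g(n)$, and the Fourier-coefficient integrals need not converge absolutely, so every interchange of limits has to be justified by Abel–Dirichlet summation by parts together with the Dirichlet–Jordan convergence test for Fourier series of functions of bounded variation. This is exactly why one leans on the classical references rather than on the elementary Poisson formula for Schwartz functions; once these convergence points are granted, the remainder is just bookkeeping of the even and odd contributions.
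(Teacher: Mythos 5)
Your derivation is essentially correct, but be aware that the paper does not prove this lemma at all: it is quoted verbatim from the literature, and the results it points to (\cite[Thm.~5]{DF37}, \cite[Thm.~45]{Tit48}) are already the one-sided cosine form, i.e.\ essentially the lemma itself, with the midpoint value $\frac12\big(g(t+0)+g(t-0)\big)$ simplified to $g(t)$ by the continuity hypothesis and $g(0)$ meaning $g(0^+)$, exactly as you observe. So your even reflection does not reduce the lemma to a stronger two-sided classical statement than the one being cited; what your write-up really does is replace the citation by a proof, namely the standard periodization argument underlying those references: set $\widetilde g(t):=g(|t|)$, form $G(t):=\sum_{n\in\Z}\widetilde g(t+n)$, check uniform convergence of the tails, unfold the Fourier coefficients of $G$ into the conditionally convergent integrals $2\int_0^\infty g(t)\cos(2\pi kt)\,\d t$ (Dirichlet's test), and apply the Dirichlet--Jordan test at $t=0$ together with the continuity of $G$ there. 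One point to state more carefully: in the parenthetical justifying the periodization you invoke only bounded variation and decay, but these alone do not give convergence of $\sum_{n}\widetilde g(t+n)$ (take $g(t)=1/\log(2+t)$); you also need the existence of the improper integral, via the estimate $\big|g(t+n)-\int_n^{n+1}g(t+s)\,\d s\big|\le\mathrm{Var}_{[t+n,t+n+1]}(g)$ combined with the Cauchy criterion for $\int^\infty g$ — precisely the ingredient you do use when proving $\sum_{n\ge1}g(n)$ converges, and which also yields the uniformity needed both for the continuity of $G$ at $0$ and for interchanging $\sum_n$ with $\int_0^1$ in the coefficient computation. With that made explicit, your argument is a complete, self-contained proof that shows where each hypothesis enters, at the cost of redoing what the paper simply delegates to Dixon--Ferrar and Titchmarsh.
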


\begin{Corollary}\label{poiscaille}
Let $f:\R^d\to\C$ be an even function such that
\begin{enumerate}
\item[(i)] $f=1$ on a neighbourhood of $\,0$.
\item[(ii)] For each $\alpha\in\N^d$ with $|\alpha|\le2$, the derivative
$\partial^\alpha f$ exists and satisfies
$
|(\partial^\alpha f)(x)|\le{\rm Const.}\;\!\langle x\rangle^{-(1+\rho)}
$
for some $\rho>0$ and all $x\in\R^d$.
\end{enumerate}
Then we have for each $x\in\R^d$ and each $y\in\R^d\setminus\{0\}$
\begin{equation}\label{formulette}
\lim_{r\to\infty}\12\sum_{n\ge1}
\Big[f\Big(\frac{x-ny}r\Big)-f\Big(\frac{x+ny}r\Big)\Big]
=-x\cdot(\nabla R_f)(y).
\end{equation}
In particular, if $f$ is radial, the l.h.s. is independent of $f$ and equal to
$(x\cdot y)/y^2$.	
\end{Corollary}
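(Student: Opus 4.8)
The plan is to deduce this summation formula from its continuous analogue, Proposition \ref{f_integrale}, by feeding a suitable function into the Poisson summation formula of Lemma \ref{poisson}. First note that hypothesis (ii) with $\alpha=0$ forces $|f(x)|\le{\rm Const.}\;\!\langle x\rangle^{-(1+\rho)}$, so $f$ satisfies both Assumption \ref{assumption_f} and the hypotheses of Lemma \ref{function_R}.(a); in particular $R_f$ is well-defined and Proposition \ref{f_integrale} is applicable to $f$. Fix $x\in\R^d$ and $y\in\R^d\setminus\{0\}$, and for $r>0$ put
\[
g_r(t):=\12\big[f\big((x-ty)/r\big)-f\big((x+ty)/r\big)\big],\qquad t\ge0 .
\]
Hypothesis (ii) makes $g_r$ a function of class $\mathsf C^2$ on $[0,\infty)$, and the decay bounds on $f$ and $\nabla f$, together with the change of variables $s:=t/r$, show that $g_r$ and $g_r'$ lie in $\lone\big((0,\infty),\d t\big)$ and that $g_r(t)\to0$ as $t\to\infty$; hence $g_r$ is continuous, of bounded variation on $(0,\infty)$ (its total variation being $\int_0^\infty\d t\,|g_r'(t)|<\infty$), tends to $0$ at infinity and has a convergent improper Riemann integral, so Lemma \ref{poisson} applies. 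Since $g_r(0)=0$ trivially, it yields
\[
\12\sum_{n\ge1}\big[f\big((x-ny)/r\big)-f\big((x+ny)/r\big)\big]
=\int_0^\infty\d t\,g_r(t)+2\sum_{n\ge1}\int_0^\infty\d t\,\cos(2\pi nt)\,g_r(t) .
\]
By Proposition \ref{f_integrale} (applicable because $f$ is even and as in Lemma \ref{function_R}.(a)) the first term on the right converges to $-x\cdot(\nabla R_f)(y)$ as $r\to\infty$, so it only remains to prove that the oscillatory sum on the right tends to $0$.

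To control this oscillatory sum I would integrate by parts twice in $t$. The boundary contributions at $t=\infty$ vanish because $g_r$ and $g_r'$ decay; the boundary term at $t=0$ produced by the first integration by parts vanishes since $\sin0=0$, and the one produced by the second is proportional to $g_r'(0)=-\tfrac1r\,y\cdot(\nabla f)(x/r)$, which equals $0$ once $r$ is large enough, because $f\equiv1$ near $0$ forces $\nabla f$ to vanish near $0$. Hence, for $r$ large,
\[
\int_0^\infty\d t\,\cos(2\pi nt)\,g_r(t)=-\frac1{(2\pi n)^2}\int_0^\infty\d t\,\cos(2\pi nt)\,g_r''(t) ,
\]
and, using $\sum_{n\ge1}(2\pi n)^{-2}=\tfrac1{24}$,
\[
\Big|2\sum_{n\ge1}\int_0^\infty\d t\,\cos(2\pi nt)\,g_r(t)\Big|\le\frac1{12}\int_0^\infty\d t\,\big|g_r''(t)\big| .
\]
Since $g_r''(t)=\tfrac1{2r^2}\sum_{j,k}y_jy_k\big[(\partial_j\partial_kf)\big((x-ty)/r\big)-(\partial_j\partial_kf)\big((x+ty)/r\big)\big]$, the second-derivative bound in hypothesis (ii) and the change of variables $s:=t/r$ give
\[
\int_0^\infty\d t\,\big|g_r''(t)\big|
\le\frac{{\rm Const.}}r\int_0^\infty\d s\,\big(\langle x/r-sy\rangle^{-(1+\rho)}+\langle x/r+sy\rangle^{-(1+\rho)}\big) ,
\]
and since $|x/r|\le|x|$ for $r\ge1$ the last integral is bounded uniformly in $r\ge1$ by a constant depending only on $x$ and $y$. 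Therefore the oscillatory sum is $O(r^{-1})$ and vanishes in the limit, which establishes \eqref{formulette}. The final assertion for radial $f$ then follows at once from Lemma \ref{function_R}.(c), which gives $(\nabla R_f)(y)=-y^{-2}y$, whence $-x\cdot(\nabla R_f)(y)=(x\cdot y)/y^2$.

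The one genuinely delicate point is the estimate of the oscillatory remainder: a naive $\lone$ bound applied directly to $g_r$ is useless, since $\|g_r\|_{\lone((0,\infty))}$ grows like $r$ as $r\to\infty$ and carries no decay in $n$. The two integrations by parts are therefore essential — they produce the summable weight $(2\pi n)^{-2}$ (so that one may interchange $\lim_r$ with $\sum_n$) and simultaneously trade $\|g_r\|_{\lone}$ for $\|g_r''\|_{\lone}=O(r^{-1})$. Apart from this, the only things requiring care are the verification that $g_r$ really meets all the hypotheses of Lemma \ref{poisson} (in particular the bounded-variation condition, which comes from $g_r'\in\lone$) and the vanishing of the boundary terms, for which the hypotheses ``$f\equiv1$ near $0$'' and ``$f,\nabla f$ decay'' are exactly what is needed.
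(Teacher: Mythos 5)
Your proof is correct and follows essentially the same route as the paper: Poisson summation (Lemma \ref{poisson}), Proposition \ref{f_integrale} for the main term, and two integrations by parts combined with the decay of the second derivatives of $f$ to kill the oscillatory remainder. The only (harmless) difference is that you make the boundary term proportional to $g_r'(0)$ vanish exactly for large $r$ via $f\equiv1$ near $0$, whereas the paper keeps it and lets it tend to zero through the explicit factor $\nu=1/r$ together with $\sum_{n\ge1}n^{-2}<\infty$.
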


\begin{proof}
For $r>0$ given, the function
$$
\textstyle g_r:(0,\infty)\to\C,\quad t\mapsto
g_r(t):=f\big(\frac{x-ty}r\big)-f\big(\frac{x+ty}r\big),
$$
satisfies all the hypotheses of Lemma \ref{poisson}. Thus
$$
\lim_{r\to\infty}\12\sum_{n\ge1}\textstyle
\big[f\big(\frac{x-ny}r\big)-f\big(\frac{x+ny}r\big)\big]
=\displaystyle\lim_{r\to\infty}\12\int_0^\infty\d t\,g_r(t)
+\lim_{r\to\infty}\sum_{n\ge1}\int_0^\infty\d t\,\cos(2\pi nt)g_r(t).
$$
The first term is equal to $-x\cdot(\nabla R_f)(y)$ due to Proposition
\ref{f_integrale}. For the second term, the change of variables $\mu:=t/r$,
$\nu:=1/r$, and two integrations by parts give
\begin{align*}
&\lim_{r\to\infty}\sum_{n\ge1}\int_0^\infty\d t\,\cos(2\pi nt)g_r(t)\\
&=\lim_{\nu\searrow0}\sum_{n\ge1}\int_0^\infty\frac{\d\mu}\nu\,
\cos(2\pi n\mu/\nu)
\big[f(\nu x-\mu y)-f(\nu x+\mu y)\big]\\
&=\sum_jy_j\lim_{\nu\searrow0}\sum_{n\ge1}\int_0^\infty\d\mu\,
\frac{\sin(2\pi n\mu/\nu)}{2\pi n}
\bigg(\frac{\partial f}{\partial x_j}(\nu x-\mu y)
+\frac{\partial f}{\partial x_j}(\nu x+\mu y)\bigg)\\
&=\sum_jy_j\lim_{\nu\searrow0}\sum_{n\ge1}\frac{2\nu}{(2\pi n)^2}
\frac{\partial f}{\partial x_j}\big(\nu x\big)\\
&\quad-\sum_{j,k}y_jy_k\lim_{\nu\searrow0}\sum_{n\ge1}\int_0^\infty\d\mu\,
\frac{\nu\cos(2\pi n\mu/\nu)}{(2\pi n)^2}
\bigg(\frac{\partial^2f}{\partial x_k\partial x_j}(\nu x-\mu y)
+\frac{\partial^2f}{\partial x_k\partial x_j}(\nu x+\mu y)\bigg).
\end{align*}
Since $\sum_{n\ge1}1/n^2<\infty$, one sees directly that the first term is equal to
zero. Using the fact that
$
\big|\frac{\partial^2f}{\partial x_k\partial x_j}(x)\big|
\le{\rm Const.}\<x\>^{-(1+\rho)}
$
for some $\rho>0$ and all $x\in\R^d$, one also obtains that the second term is equal
to zero. Therefore,
$$
\lim_{r\to\infty}\12\sum_{n\ge1}\textstyle
\big[f\big(\frac{x-ny}r\big)-f\big(\frac{x+ny}r\big)\big]
=-x\cdot(\nabla R_f)(y),
$$
and the claim is proved.
\end{proof}

\section{Hamiltonian dynamics}\label{sec_ham}
\setcounter{equation}{0}

In the sequel, we require the presence of a symplectic structure in order to speak of
Hamiltonian dynamics. However our results still hold if one is only given a Poisson
structure. A lack of examples and some complications in infinite dimension regarding
the identification of vector fields with derivations have led us to restrict the
discussion to the symplectic case for the sake of clarity.

\subsection{Critical points}\label{Sec_Crit}

Let $M$ be a symplectic manifold, \ie a smooth  manifold endowed with a closed
two-form $\omega$ such that the morphism
$
TM\ni X\mapsto\omega^\flat(X):=\iota_X\omega
$
is an isomorphism. In infinite dimension, such a manifold is said to be a strong
symplectic manifold (in opposition to a weak symplectic manifold, when the above map
is only injective; see \cite[Sec.~8.1]{AMR88}). When the dimension is finite, the
dimension must be even, say equal to $2n$, and the $2n$-form $\omega^n$ must be a
volume form. The Poisson bracket is defined as follows: for each $f\in\cinf(M)$ we
define the vector field $X_f:=(\omega^\flat)^{-1}(\d f)$, \ie
$\d f(\;\!\cdot\;\!)=\omega(X_f,\;\!\cdot\;\!)$, and set $\{f,g\}:=\omega(X_f,X_g)$
for each $f,g\in\cinf(M)$.

In the sequel, the function $H\in\cinf(M)$ is an Hamiltonian with complete vector
field $X_H$. So, the flow $\{\varphi_t\}$ associated to $H$ is defined for all
$t\in\R$, it preserves the Poisson bracket:
$$
\big\{f\circ\varphi_t,g\circ\varphi_t\big\}=\{f,g\}\circ\varphi_t,\quad t\in\R,
$$
and satisfies the usual evolution equation:
\begin{equation}\label{evolution}
\frac\d{\d t}\;\!f\circ\varphi_t=\{f,H\}\circ\varphi_t,\quad t\in\R.
\end{equation}
In particular, the Hamiltonian $H$ is preserved along its flow, \ie
$H\circ\varphi_t=H$ for all $t\in\R$. We also consider an abstract family 
$\Phi\equiv(\Phi_1,\ldots,\Phi_d)\in\cinf(M;\R^d)$ of observables\footnote{If need
be, the results of this article can be extended to the case where $H$ and $\Phi_j$
are functions of class $\cone$ with $\{\Phi_j,H\}$ also $\cone$.}, and define the
associated functions
$$
\partial_jH:=\{\Phi_j,H\}\in\cinf(M)\qquad\hbox{and}\qquad
\nabla H:=(\partial_1H,\ldots,\partial_dH)\in\cinf(M;\R^d).
$$
Then, one can introduce a natural set of critical points:

\begin{Definition}[Critical points]
The set
$$
\Crit(H,\Phi):=(\nabla H)^{-1}(\{0\})\subset M
$$
is called the set of critical points associated to $H$ and $\Phi$.
\end{Definition}

The set $\Crit(H,\Phi)$ is closed in $M$ since $\nabla H$ is continuous. Furthermore,
since $\{\Phi_j,H\}=\d\Phi_j(X_H)$, the set
$$
\Crit(H):=\big\{m\in M\mid X_H(m)=0\big\}\equiv\big\{m\in M\mid \d H_m=0\big\}
$$
of usual critical points of $H$ satisfies the inclusion
$\Crit(H)\subset\Crit(H,\Phi)$.

Our main assumption is the following:

\begin{Assumption}\label{AssCom}
One has $\big\{\{\Phi_j,H\},H\big\}=0\,$ for each $j\in\{1,\ldots,d\}$.
\end{Assumption}

Assumption \ref{AssCom} imposes that all the brackets $\{\Phi_j,H\}$ are first
integrals of the motion given by $H$. When $M$ is a symplectic manifold of dimension
$2n$, these first integrals are functions of $k \in \{1,2,\ldots,2n-1\}$ independent
first integrals $J_1\equiv H,J_2,\ldots,J_k$ ($J_1,\ldots,J_k$ are independent in the
sense that their differential are linearly independent at each point of
$M$)\footnote{In the setup of Liouville's theorem \cite[Sec.~49]{Arn89}, we have
$k=n$ and the first integrals are mutually in involution. Furthermore, on the
connected components of submanifolds given by fixing the values of these $n$
integrals in involution, the flow is conjugate to a translation flow on cylinders
$\R^{n-\ell}\times\mathbb T^\ell$ (see \cite[Thm.~5.2.24]{AM78}).}. So, one should
have $\{\Phi_j,H\}=g_j(J_1,\ldots,J_k)$ for some functions $g_j\in\cinf(\R^n;\R)$.
Using the properties of $\{\;\!\cdot\;\!,H\}$ as a derivation, one infers that
$\big\{g_j(J_1,\ldots,J_k)^{-1}\Phi_j,H\big\}=1$ outside
$g_j(J_1,\ldots,J_k)^{-1}(\{0\})$. Thus, if $k$ first integrals as $J_1,\ldots,J_k$
are known, finding functions $\Phi_j$ satisfying Assumption \ref{AssCom} is to some
extent equivalent to finding functions $\Phi_0$ solving $\{\Phi_0,H\}=1$ (the
equivalence is not complete because these functions $\Phi_0$ are in general not
$\cinf$ since $\{\;\!\cdot\;\!,H\}$ is necessarily $0$ on $\Crit(H)$).

For further use, we define the $\cinf$-function $T_f:M\setminus\Crit(H,\Phi)\to\R$ by
$$
T_f:=-\Phi\cdot(\nabla R_f)(\nabla H).
$$
When $f$ is radial, $T_f$ is independent of $f$ and equal to
$$
T:=\Phi\cdot\frac{\nabla H}{(\nabla H)^2}\;\!,
$$
due to Lemma \ref{function_R}.(c).
In fact, the closed subset $T^{-1}(\{0\})$ of $M\setminus\Crit(H,\Phi)$ admits an
interesting interpretation: If we consider the observables $\Phi_j$ as the components
of an abstract position observable $\Phi$, then $\nabla H$ can be seen as the
velocity vector for the Hamiltonian $H$, and the condition
\begin{equation}\label{truffette}
T(m)=0~\iff~\Phi(m)\cdot(\nabla H)(m)=0
\end{equation}
means that the position and velocity vectors are orthogonal at $m\in T^{-1}(\{0\})$.
Alternatively, one has $T(m)=0$ if and only if the vector fields $X_{|\Phi|^2}$ and
$X_H$ are $\omega$-orthogonal at $m$, that is,
$\omega_m\big(X_{|\Phi|^2}(m),X_H(m)\big)=0$. The simplest example illustrating the
condition \eqref{truffette} is when $\Phi(q,p):=q$ and $H(q,p):=\12|p|^2$ are the
usual position and kinetic energy on
$(M,\omega):=\big(\R^{2n},\sum_{j=1}^n\d q^j\wedge\d p_j\big)$. In such a case,
\eqref{truffette} reduces to $q\cdot p=0$.

\subsection{Sojourn times of classical orbits}\label{sec_main_form}

Next Theorem is our main result. We refer to Remark \ref{Rem_Int} below for its
interpretation.

\begin{Theorem}\label{IntCont}
Let $H$ and $\Phi$ satisfy Assumption \ref{AssCom}. Let $f:\R^d\to\C$ be an even
function as in Lemma \ref{function_R}.(a). Then we have for each point
$m\in M\setminus\Crit(H,\Phi)$
\begin{equation}\label{BelleLouloute}
\lim_{r\to\infty}\12\int_0^\infty\d t\,\big[\big(f(\Phi/r)\circ\varphi_{-t}\big)(m)
-\big(f(\Phi/r)\circ\varphi_t\big)(m)\big]
=T_f(m).
\end{equation}
In particular, if $f$ is radial, the l.h.s. is independent of $f$ and equal to
$\Phi(m)\cdot\frac{(\nabla H)(m)}{(\nabla H)(m)^2}$\;\!.
\end{Theorem}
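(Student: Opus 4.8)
The plan is to reduce the claim to the special case already handled in Proposition \ref{f_integrale} by freezing the trajectory along suitable orbit-dependent vectors. Fix $m\in M\setminus\Crit(H,\Phi)$ and consider the curve $t\mapsto\Phi\big(\varphi_t(m)\big)\in\R^d$. First I would differentiate this curve using the evolution equation \eqref{evolution}: $\frac\d{\d t}\Phi_j\big(\varphi_t(m)\big)=\{\Phi_j,H\}\big(\varphi_t(m)\big)=(\partial_jH)\big(\varphi_t(m)\big)$. Now Assumption \ref{AssCom} enters decisively: since $\big\{\{\Phi_j,H\},H\big\}=0$, the functions $\partial_jH$ are first integrals, so $(\partial_jH)\big(\varphi_t(m)\big)=(\partial_jH)(m)$ is constant in $t$. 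Hence the curve is \emph{affine}:
\begin{equation*}
\Phi\big(\varphi_t(m)\big)=\Phi(m)+t\,(\nabla H)(m)\qquad\text{for all }t\in\R.
\end{equation*}
Setting $x:=\Phi(m)\in\R^d$ and $y:=(\nabla H)(m)\in\R^d$, the hypothesis $m\notin\Crit(H,\Phi)$ means exactly $y\neq 0$, and we get $\big(f(\Phi/r)\circ\varphi_{\pm t}\big)(m)=f\big((x\pm ty)/r\big)$.

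With this identity the left-hand side of \eqref{BelleLouloute} becomes
\begin{equation*}
\lim_{r\to\infty}\12\int_0^\infty\d t\,\Big[f\Big(\tfrac{x-ty}{r}\Big)-f\Big(\tfrac{x+ty}{r}\Big)\Big],
\end{equation*}
which is precisely the left-hand side of \eqref{tite_formule} in Proposition \ref{f_integrale}. Since $f$ is even and of the class in Lemma \ref{function_R}.(a), that proposition applies with our $x$ and our $y\neq 0$, giving the value $-x\cdot(\nabla R_f)(y)=-\Phi(m)\cdot(\nabla R_f)\big((\nabla H)(m)\big)=T_f(m)$ by the definition of $T_f$ in Section \ref{Sec_Crit}. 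The ``in particular'' clause then follows from the radial case of Proposition \ref{f_integrale} together with Lemma \ref{function_R}.(c), which gives $(\nabla R_f)(y)=-y/y^2$ and hence $T_f(m)=\Phi(m)\cdot\frac{(\nabla H)(m)}{(\nabla H)(m)^2}$.

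The only genuine content is the affineness of $t\mapsto\Phi\big(\varphi_t(m)\big)$, and the one point deserving care is the justification that $\partial_jH$ being a first integral (i.e. $\{\partial_jH,H\}=0$ pointwise) indeed forces $\frac\d{\d t}(\partial_jH)\big(\varphi_t(m)\big)=0$ via \eqref{evolution} applied to $\partial_jH$ in place of $f$ — this is immediate since $\partial_jH\in\cinf(M)$. After that, everything is a citation: the dominated-convergence argument and the change of variables $\mu=t/r$ have already been carried out inside the proof of Proposition \ref{f_integrale}, so I expect no real obstacle here; the main (mild) subtlety is simply making sure the regularity hypotheses on $f$ transfer verbatim, which they do since the statement of Theorem \ref{IntCont} quotes Lemma \ref{function_R}.(a) directly. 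The smoothness of $T_f$ on $M\setminus\Crit(H,\Phi)$, needed for the statement to make sense, follows because $\nabla H$ is $\cinf$ and takes values in $\R^d\setminus\{0\}$ there, while $\nabla R_f\in\cinf(\R^d\setminus\{0\})$ when $f\in\S(\R^d)$ (Lemma \ref{function_R}.(a)); for the merely-$\linf$ case one reads \eqref{BelleLouloute} pointwise.
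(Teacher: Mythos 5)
Your proposal is correct and follows exactly the paper's own argument: Assumption \ref{AssCom} plus the evolution equation \eqref{evolution} give the affine law $\Phi\big(\varphi_t(m)\big)=\Phi(m)+t(\nabla H)(m)$, after which Proposition \ref{f_integrale} with $x=\Phi(m)$, $y=(\nabla H)(m)\neq0$ yields $T_f(m)$, and Lemma \ref{function_R}.(c) gives the radial case. No gaps.
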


\begin{proof}
Equation \eqref{evolution} implies that
$$
\frac\d{\d t}\;\!\Phi_j\circ\varphi_t=\{\Phi_j,H\}\circ\varphi_t
$$
for each $t\in\R$. Similarly, using Assumption \ref{AssCom}, one gets that
$$
\frac\d{\d t}\;\!\{\Phi_j,H\}\circ\varphi_t
=\big\{\{\Phi_j,H\},H\big\}\circ\varphi_t
=0.
$$
So, $\Phi_j$ varies linearly in $t$ along the flow of $X_H$, and one gets for any
$m\in M$
$$
(\Phi_j\circ\varphi_t)(m)
=(\Phi_j\circ\varphi_0)(m)
+t\,\Big(\frac\d{\d t}\;\!(\Phi_j\circ\varphi_t)(m)\Big|_{t=0}\Big)
=\Phi_j(m)+t(\partial_jH)(m).
$$
This, together with Formula \eqref{tite_formule}, gives
\begin{align*}
&\lim_{r\to\infty}\12\int_0^\infty\d t\,
\big[\big(f(\Phi/r)\circ\varphi_{-t}\big)(m)
-\big(f(\Phi/r)\circ\varphi_t\big)(m)\big]\\
&=\lim_{r\to\infty}\12\int_0^\infty\d t\,\textstyle
\Big[f\Big(\frac{\Phi(m)-t(\nabla H)(m)}r\Big)
-f\Big(\frac{\Phi(m)+t(\nabla H)(m)}r\Big)\Big]\\
&=T_f(m).\qedhere
\end{align*}
\end{proof}

Due to Lemma \ref{boulette}, the proof of Theorem \ref{IntCont} also works in the
case $f=\chi_1$. So, we have the following corollary.

\begin{Corollary}\label{cor_car}
Let $H$ and $\Phi$ satisfy Assumption \ref{AssCom}. Then we have for each point
$m\in M\setminus\Crit(H,\Phi)$
\begin{equation}\label{2eme_loulette}
\lim_{r\to\infty}\12\int_0^\infty\d t\,\textstyle
\big[\big(\chi_1(\Phi/r)\circ\varphi_{-t}\big)(m)
-\big(\chi_1(\Phi/r)\circ\varphi_t\big)(m)\big]
=\Phi(m)\cdot\frac{(\nabla H)(m)}{(\nabla H)(m)^2}\;\!.
\end{equation}
\end{Corollary}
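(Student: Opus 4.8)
The plan is to imitate the proof of Theorem \ref{IntCont} verbatim, replacing the appeal to Proposition \ref{f_integrale} by the corresponding statement for the characteristic function $\chi_1$, which is exactly Lemma \ref{boulette}. Concretely, fix $m\in M\setminus\Crit(H,\Phi)$, so that $(\nabla H)(m)\ne0$. As in the proof of Theorem \ref{IntCont}, Equation \eqref{evolution} gives $\frac\d{\d t}\,\Phi_j\circ\varphi_t=\{\Phi_j,H\}\circ\varphi_t$, while Assumption \ref{AssCom} forces $\frac\d{\d t}\,\{\Phi_j,H\}\circ\varphi_t=\big\{\{\Phi_j,H\},H\big\}\circ\varphi_t=0$. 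Hence each $\Phi_j$ is affine in $t$ along the flow, and one obtains the key identity $(\Phi_j\circ\varphi_t)(m)=\Phi_j(m)+t\,(\partial_jH)(m)$, that is $(\Phi\circ\varphi_t)(m)=\Phi(m)+t\,(\nabla H)(m)$ for all $t\in\R$.

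Substituting this into the left-hand side of \eqref{2eme_loulette} yields
\begin{align*}
&\lim_{r\to\infty}\12\int_0^\infty\d t\,\textstyle
\big[\big(\chi_1(\Phi/r)\circ\varphi_{-t}\big)(m)
-\big(\chi_1(\Phi/r)\circ\varphi_t\big)(m)\big]\\
&=\lim_{r\to\infty}\12\int_0^\infty\d t\,\textstyle
\Big[\chi_1\Big(\frac{\Phi(m)-t(\nabla H)(m)}r\Big)
-\chi_1\Big(\frac{\Phi(m)+t(\nabla H)(m)}r\Big)\Big].
\end{align*}
Now apply Lemma \ref{boulette} with $x:=\Phi(m)\in\R^d$ and $y:=(\nabla H)(m)\in\R^d\setminus\{0\}$; this immediately identifies the limit with $\frac{x\cdot y}{y^2}=\Phi(m)\cdot\frac{(\nabla H)(m)}{(\nabla H)(m)^2}$, which is the claimed value. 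Since $\chi_1$ is even and $y\ne0$, all hypotheses of Lemma \ref{boulette} are met, and the proof is complete.

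There is essentially no obstacle here: the content is entirely in Lemma \ref{boulette}, and the reduction to it is the same three-line argument used for Theorem \ref{IntCont}, with the only point worth a word being that one must check $(\nabla H)(m)\ne0$, which holds precisely because $m\notin\Crit(H,\Phi)$, so that the denominator $(\nabla H)(m)^2$ is nonzero and the integrand on the right makes sense after the change of variables in Lemma \ref{boulette}. Thus the corollary follows from Theorem \ref{IntCont}'s proof scheme together with Lemma \ref{boulette} in place of Proposition \ref{f_integrale}.
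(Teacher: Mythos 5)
Your proof is correct and is exactly the paper's argument: the paper also deduces the corollary by noting that, thanks to Lemma \ref{boulette}, the proof of Theorem \ref{IntCont} (linearity of $\Phi\circ\varphi_t$ along the flow plus substitution of $x=\Phi(m)$, $y=(\nabla H)(m)\neq0$) goes through verbatim with $f=\chi_1$. Nothing is missing.
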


We know from the proof of Theorem \ref{IntCont} that
\begin{equation}\label{linear}
(\Phi_j\circ\varphi_t)(m)=\Phi_j(m)+t(\partial_jH)(m)
\quad\hbox{for all $t\in\R$ and all $m\in M$.}
\end{equation}
Therefore, the l.h.s. of \eqref{BelleLouloute} and \eqref{2eme_loulette} are zero if
$m\in\Crit(H,\Phi)$.

For the next remark, we recall that any selfadjoint operator $A$ in a Hilbert space
$\H$, with spectral measure $E^A(\;\!\cdot\;\!)$, is reduced by an orthogonal
decomposition \cite[Sec. 7.4]{Wei80}
$$
\H
=\H_{\rm ac}(A)\oplus\H_{\rm p}(A)\oplus\H_{\rm sc}(A)
\equiv\H_{\rm ac}(A)\oplus\H_{\rm s}(A),
$$
where $\H_{\rm ac}(A),\H_{\rm p}(A),\H_{\rm sc}(A)$ and $\H_{\rm s}(A)$ are
respectively the absolutely continuous, the pure point, the singular continuous and
the singular subspaces of $A$. Furthermore, a vector $\varphi\in\H$ is said to have
spectral support with respect to $A$ in a set $J\subset\R$ if
$\varphi=E^A(J)\varphi$.

\begin{Remark}\label{rem_spec}
If $m\in\Crit(H,\Phi)$, then one must have $\varphi_t(m)\in\Crit(H,\Phi)$ for all
$t\in\R$, since \eqref{linear} implies $(\partial_jH)(\varphi_t(m))=(\partial_jH)(m)$
for all $t\in\R$. Conversely, if $m\in M\setminus\Crit(H,\Phi)$, then one must have
$\varphi_t(m)\neq m$ for all $t\neq0$, since $\Phi$ cannot take two different values
at a same point. So, under Assumption \ref{AssCom}, each orbit
$\{\varphi_t(m)\}_{t\in\R}$ either stays in $\Crit(H,\Phi)$ if $m\in\Crit(H,\Phi)$,
or stays outside $\Crit(H,\Phi)$ and is not periodic if $m\notin\Crit(H,\Phi)$.

In the corresponding Hilbertian framework \cite{RT10}, the Hamiltonian $H$ and the
functions $\Phi_j$ are selfadjoint operators in a Hilbert space $\H$, and the
critical set $\kappa$ associated to $H$ and $\Phi$ is a closed subset of the spectrum
of $H$. Outside $\kappa$, the spectrum of $H$ is purely absolutely continuous
\cite[Thm.~3.6.(a)]{RT10}. Therefore, vectors $\psi\in\H$ having spectral support
with respect to $H$ in $\kappa$ belong to the singular subspace $\H_{\rm s}(H)$ of
$H$, and thus lead to orbits $\{\e^{itH}\psi\}_{t\in\R}$ confined in $\H_{\rm s}(H)$
(for instance, $\e^{itH}\psi$ stays in a one-dimensional subspace of $\H$ if $\psi$
is an eigenvector of $H$). Conversely, vectors $\psi\in\H$ having spectral support
outside $\kappa$ belong to the absolute continuous subspace $\H_{\rm ac}(H)$ of $H$,
and thus lead to orbits $\{\e^{itH}\psi\}_{t\in\R}$ contained in $\H_{\rm ac}(H)$
(see \cite[Prop.~5.7]{Amr09} for the escape properties of such orbits). These
properties are the quantum counterparts of the confinement to $\Crit(H,\Phi)$
(when $m\in\Crit(H,\Phi)$) and the non-periodicity outside $\Crit(H,\Phi)$ (when
$m\notin\Crit(H,\Phi)$) of the classical orbits $\{\varphi_t(m)\}_{t\in\R}$.
\end{Remark}

\begin{Lemma}\label{LemChile}
If $H$, $\Phi$ and $f$ satisfy the assumptions of Theorem \ref{IntCont}, then we
have
\begin{equation}\label{E_derivative}
\{T_f,H\}\circ\varphi_t\equiv\frac\d{\d t}\;\!(T_f\circ\varphi_t)=1
\end{equation}
on $M\setminus\Crit(H,\Phi)$. In particular, one has $T_f\circ\varphi_t=T_f+t$ on 
$M\setminus\Crit(H,\Phi)$.
\end{Lemma}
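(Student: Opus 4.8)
The plan is to compute the Poisson bracket $\{T_f,H\}$ directly from the definition $T_f=-\Phi\cdot(\nabla R_f)(\nabla H)$ and to use Assumption \ref{AssCom} to eliminate the only potentially troublesome term, before invoking the homogeneity relation \eqref{minusone}. Throughout one works on the open set $M\setminus\Crit(H,\Phi)$, on which $\nabla H$ does not vanish and hence (by Lemma \ref{function_R}.(a)) $R_f$ and $\nabla R_f$ — and thus $T_f$ — are defined and differentiable.

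First I would record two consequences of Assumption \ref{AssCom}. Since $\{\partial_jH,H\}=\big\{\{\Phi_j,H\},H\big\}=0$, every component $\partial_jH$ of $\nabla H$ is a first integral of the flow $\{\varphi_t\}$; consequently $\nabla H\circ\varphi_t=\nabla H$ for all $t\in\R$, and therefore the $\C^d$-valued function $(\nabla R_f)(\nabla H)$ is itself a first integral, i.e. $\big\{(\partial_jR_f)(\nabla H),H\big\}=0$ (being a function of the conserved quantities $\partial_1H,\ldots,\partial_dH$ only). Moreover, as already observed in Remark \ref{rem_spec}, \eqref{linear} forces $\Crit(H,\Phi)$, hence also its complement, to be invariant under $\{\varphi_t\}$, so that $T_f\circ\varphi_t$ is well defined for every $t\in\R$ whenever $m\in M\setminus\Crit(H,\Phi)$.

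Next I would expand $\{T_f,H\}$ by the Leibniz rule. Writing $T_f=-\sum_{j=1}^d\Phi_j\,(\partial_jR_f)(\nabla H)$, we get
$$
\{T_f,H\}=-\sum_{j=1}^d\Big[\{\Phi_j,H\}\,(\partial_jR_f)(\nabla H)+\Phi_j\,\big\{(\partial_jR_f)(\nabla H),H\big\}\Big].
$$
The second summand vanishes by the previous paragraph, so
$\{T_f,H\}=-\sum_{j=1}^d(\partial_jH)\,(\partial_jR_f)(\nabla H)=-(\nabla H)\cdot(\nabla R_f)(\nabla H)$. Evaluating at $m\in M\setminus\Crit(H,\Phi)$, the vector $x:=(\nabla H)(m)$ is nonzero, and the homogeneity identity \eqref{minusone} of Lemma \ref{function_R}.(b) gives $x\cdot(\nabla R_f)(x)=-1$, whence $\{T_f,H\}(m)=1$. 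This is the first equality in \eqref{E_derivative}; the identification with $\frac\d{\d t}(T_f\circ\varphi_t)$ is just \eqref{evolution} applied to $T_f$ on the invariant open set $M\setminus\Crit(H,\Phi)$, and integrating the resulting equation $\frac\d{\d t}(T_f\circ\varphi_t)=1$ from $0$ to $t$ yields $T_f\circ\varphi_t=T_f+t$ there.

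I do not anticipate a genuine obstacle: the only place the hypotheses really enter is the vanishing of $\big\{(\partial_jR_f)(\nabla H),H\big\}$ (which is precisely Assumption \ref{AssCom}, the conservation of the components of $\nabla H$) and the relation $x\cdot(\nabla R_f)(x)=-1$ (the homogeneity built into $R_f$, needing only the differentiability from Lemma \ref{function_R}.(a), since $x\cdot(\nabla R_f)(x)=\int_0^\infty\d\mu\,\frac\d{\d\mu}f(\mu x)=0-1$). The only care required is bookkeeping — that $T_f$ and $\nabla R_f$ live on $M\setminus\Crit(H,\Phi)$ and that this set is flow-invariant, so that the Leibniz rule, the chain rule through $\nabla H$, and integration along orbits are all legitimate there. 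Alternatively, one may bypass the Poisson-bracket chain rule entirely and differentiate $t\mapsto T_f(\varphi_t(m))=-(\Phi\circ\varphi_t)(m)\cdot(\nabla R_f)\big((\nabla H)(m)\big)$ directly, using $\frac\d{\d t}(\Phi_j\circ\varphi_t)(m)=(\partial_jH)(m)$ from \eqref{linear} together with \eqref{minusone}; this produces the value $1$ in one line.
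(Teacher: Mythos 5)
Your proof is correct and rests on the same three ingredients as the paper's — Assumption \ref{AssCom}, the evolution equation \eqref{evolution}, and the homogeneity identity \eqref{minusone} — but it assembles them in the opposite order. The paper writes $T_f\circ\varphi_t=-(\Phi\circ\varphi_t)\cdot(\nabla R_f)\big(\{\Phi\circ\varphi_t,H\}\big)$, uses the invariance of $H$ and of the Poisson bracket together with \eqref{linear} to reduce this to the affine-in-$t$ expression $-\big(\Phi+t\,\nabla H\big)\cdot(\nabla R_f)(\nabla H)$, and then differentiates in $t$; you instead compute the bracket $\{T_f,H\}$ once and for all by the Leibniz rule, discard the terms $\Phi_j\,\big\{(\partial_jR_f)(\nabla H),H\big\}$ because $\nabla H$ is conserved, and only afterwards transport the result along the flow via \eqref{evolution}. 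The two computations are equivalent, but one small regularity point deserves care in your main route: if the vanishing of $\big\{(\partial_jR_f)(\nabla H),H\big\}$ is justified by the chain rule through $\nabla H$, it requires $\nabla R_f$ to be differentiable, which the hypotheses of Theorem \ref{IntCont} (only Lemma \ref{function_R}.(a)) do not guarantee in general; it is safer to argue, as you in effect do, that $(\nabla R_f)(\nabla H)$ is constant along each orbit because $(\nabla H)\circ\varphi_t=\nabla H$, so its derivative in the direction $X_H$ vanishes — or simply to use the one-line alternative you mention at the end, namely differentiating $t\mapsto-(\Phi\circ\varphi_t)(m)\cdot(\nabla R_f)\big((\nabla H)(m)\big)$ with \eqref{linear} and \eqref{minusone}, which is exactly the paper's proof. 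Your remarks that $M\setminus\Crit(H,\Phi)$ is flow-invariant and that integrating $\frac\d{\d t}(T_f\circ\varphi_t)=1$ yields $T_f\circ\varphi_t=T_f+t$ complete the statement as in the paper.
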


If we interpret the map $\frac\d{\d H}:=\{T_f,\;\!\cdot\;\!\}$ as a derivation on
$\cinf\big(M\setminus\Crit(H,\Phi)\big)$, this implies that $T_f$ can be seen as an
observable ``derivative with respect to the energy $H$'' on $M\setminus\Crit(H,\Phi)$,
since
$$
\textstyle\frac\d{\d H}(H)=\{T_f,H\}=1
$$
on each orbit $\{\varphi_t(m)\}_{t\in\R}$, with $m\in M\setminus\Crit(H,\Phi)$.

\begin{proof}[Proof of Lemma \ref{LemChile}]
The first equality in \eqref{E_derivative} follows from \eqref{evolution}. For the
second one, we use successively the fact that $\varphi_t$ leaves invariant $H$ and
the Poisson bracket, Assumption \ref{AssCom}, and Equation \eqref{minusone}. Doing
so, we get on $M\setminus\Crit(H,\Phi)$ the following equalities
\begin{align*}
\frac\d{\d t}\;\!(T_f\circ\varphi_t)
=-\frac\d{\d t}\;\!(\Phi\circ\varphi_t)\cdot
(\nabla R_f)\big(\{\Phi\circ\varphi_t,H\}\big)
&=-\frac\d{\d t}\;\!\big(\Phi+t(\nabla H)\big)\cdot
(\nabla R_f)\big(\{\Phi+t(\nabla H),H\}\big)\\
&=-\frac\d{\d t}\;\!\big(\Phi+t(\nabla H)\big)\cdot(\nabla R_f)(\nabla H)\\
&=-(\nabla H)\cdot(\nabla R_f)(\nabla H)\\
&=1.\qedhere
\end{align*}
\end{proof}

\begin{Remark}\label{Rem_Int}
Theorem \ref{IntCont} relates the sojourn times of classical orbits within expanding
regions of $M$ to the observable $T_f$. If we consider the observables $\Phi_j$ as
the components of an abstract position observable $\Phi$, then the l.h.s. of Formula
\eqref{BelleLouloute} has the following meaning: For $r>0$ and
$m\in M\setminus\Crit(H,\Phi)$ fixed, it can be interpreted as the difference of
times spent by the classical orbit $\{\varphi_t(m)\}_{t\in\R}$ in the past (first
term) and in the future (second term) within the region
$\Sigma_r:=\supp[f(\Phi/r)]\subset M$ defined by the localisation observable
$f(\Phi/r)$. Thus, Formula \eqref{BelleLouloute} shows that this difference of times
tends as $r\to\infty$ to the value of the observable $T_f$ at $m$. Since $T_f$ can be
interpreted as an observable derivative with respect to the energy $H$, Formula
\eqref{BelleLouloute} provides a new relation between sojourn times and variation of
energy along classical orbits.
\end{Remark}

As a final result, we give a discrete-time counterpart of Theorem \ref{IntCont},
which could be of some interest in the context of approximation of symplectomorphisms
by time-$1$ maps of Hamiltonians flows (see \eg \cite{BG94},
\cite[Appendix~B]{Har00}, \cite{KP94} and references therein).

\begin{Theorem}\label{thm_discrete}
Let $H$ and $\Phi$ satisfy Assumption \ref{AssCom}. Let $f:\R^d\to\C$ be an even
function such that
\begin{enumerate}
\item[(i)] $f=1$ on a neighbourhood of $\,0$.
\item[(ii)] For each $\alpha\in\N^d$ with $|\alpha|\le2$, the derivative
$\partial^\alpha f$ exists and satisfies
$|(\partial^\alpha f)(x)|\le{\rm Const.}\<x\>^{-(1+\rho)}$ for some $\rho>0$ and all
$x\in\R^d$.
\end{enumerate}
Then we have for each point $m\in M\setminus\Crit(H,\Phi)$
$$
\lim_{r\to\infty}\12\sum_{n\ge1}\big[\big(f(\Phi/r)\circ\varphi_{-n}\big)(m)
-\big(f(\Phi/r)\circ\varphi_n\big)(m)\big]
=T_f(m).
$$
In particular, if $f$ is radial, the l.h.s. is independent of $f$ and equal to
$\Phi(m)\cdot\frac{(\nabla H)(m)}{(\nabla H)(m)^2}$\;\!.
\end{Theorem}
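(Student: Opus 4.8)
The plan is to mirror the proof of Theorem \ref{IntCont}, replacing the integral formula of Proposition \ref{f_integrale} by its summation counterpart, Corollary \ref{poiscaille}. First I would recall from \eqref{linear} in the proof of Theorem \ref{IntCont} that Assumption \ref{AssCom} forces $\Phi$ to evolve linearly along the flow: for every $m\in M$ and every $t\in\R$ (in particular for $t=\pm n$, $n\in\N$),
$$
(\Phi_j\circ\varphi_t)(m)=\Phi_j(m)+t\,(\partial_jH)(m),
$$
so that $(\Phi\circ\varphi_{\mp n})(m)=\Phi(m)\mp n\,(\nabla H)(m)$. This reduces the left-hand side to a purely Euclidean sum over $\R^d$, namely
$$
\lim_{r\to\infty}\12\sum_{n\ge1}\Big[f\Big(\tfrac{\Phi(m)-n(\nabla H)(m)}r\Big)
-f\Big(\tfrac{\Phi(m)+n(\nabla H)(m)}r\Big)\Big].
$$

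Next I would apply Corollary \ref{poiscaille} with $x:=\Phi(m)\in\R^d$ and $y:=(\nabla H)(m)\in\R^d$. The hypothesis $m\in M\setminus\Crit(H,\Phi)$ guarantees precisely that $y=(\nabla H)(m)\neq0$, which is the condition $y\in\R^d\setminus\{0\}$ required by the corollary; and the conditions (i), (ii) imposed on $f$ in the statement of Theorem \ref{thm_discrete} are exactly those of Corollary \ref{poiscaille}, since $f$ is even and its derivatives up to order $2$ satisfy the stated decay. Corollary \ref{poiscaille} then yields that the displayed limit equals $-x\cdot(\nabla R_f)(y)=-\Phi(m)\cdot(\nabla R_f)\big((\nabla H)(m)\big)=T_f(m)$, by the very definition $T_f:=-\Phi\cdot(\nabla R_f)(\nabla H)$ given in Section \ref{Sec_Crit}. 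This proves the formula. The final assertion about radial $f$ follows immediately from Lemma \ref{function_R}.(c), which gives $(\nabla R_f)(y)=-y^{-2}y$, whence $T_f(m)=\Phi(m)\cdot\frac{(\nabla H)(m)}{(\nabla H)(m)^2}$ independently of $f$.

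There is essentially no obstacle here beyond bookkeeping: the genuine analytic work — convergence of the sum, the two integrations by parts controlling the Poisson-summation correction terms, and the passage to the limit $r\to\infty$ — has already been carried out once and for all in Corollary \ref{poiscaille}. The only point deserving a word is that $g_r(t):=f\big(\tfrac{x-ty}r\big)-f\big(\tfrac{x+ty}r\big)$ indeed satisfies the hypotheses of the Poisson summation Lemma \ref{poisson} (continuity, bounded variation on $(0,\infty)$, decay at infinity, existence of the improper integral), but this too is verified inside the proof of Corollary \ref{poiscaille} under precisely conditions (i), (ii). Hence the proof is a short deduction.

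\begin{proof}[Proof of Theorem \ref{thm_discrete}]
By \eqref{linear}, Assumption \ref{AssCom} implies that for each $m\in M$ and each $n\in\N$ one has
$$
(\Phi\circ\varphi_{-n})(m)=\Phi(m)-n\,(\nabla H)(m)
\qquad\hbox{and}\qquad
(\Phi\circ\varphi_n)(m)=\Phi(m)+n\,(\nabla H)(m).
$$
Fix $m\in M\setminus\Crit(H,\Phi)$, so that $(\nabla H)(m)\neq0$. Then
$$
\12\sum_{n\ge1}\big[\big(f(\Phi/r)\circ\varphi_{-n}\big)(m)
-\big(f(\Phi/r)\circ\varphi_n\big)(m)\big]
=\12\sum_{n\ge1}\Big[f\Big(\tfrac{\Phi(m)-n(\nabla H)(m)}r\Big)
-f\Big(\tfrac{\Phi(m)+n(\nabla H)(m)}r\Big)\Big].
$$
The function $f$ satisfies the hypotheses of Corollary \ref{poiscaille}, so, applying that corollary with $x:=\Phi(m)$ and $y:=(\nabla H)(m)\in\R^d\setminus\{0\}$, we obtain
$$
\lim_{r\to\infty}\12\sum_{n\ge1}\big[\big(f(\Phi/r)\circ\varphi_{-n}\big)(m)
-\big(f(\Phi/r)\circ\varphi_n\big)(m)\big]
=-\Phi(m)\cdot(\nabla R_f)\big((\nabla H)(m)\big)
=T_f(m).
$$
If $f$ is radial, Lemma \ref{function_R}.(c) gives $(\nabla R_f)(y)=-y^{-2}y$, hence the limit equals $\Phi(m)\cdot\frac{(\nabla H)(m)}{(\nabla H)(m)^2}$, independently of $f$.
\end{proof}
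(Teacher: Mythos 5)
Your proof is correct and follows essentially the same route as the paper's: reduce the sum via the linearity relation \eqref{linear} to a Euclidean sum with $x=\Phi(m)$, $y=(\nabla H)(m)\neq0$, and then invoke Corollary \ref{poiscaille} (Formula \eqref{formulette}), with the radial case handled by Lemma \ref{function_R}.(c). No gaps; your write-up is if anything slightly more explicit than the paper's two-line argument.
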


\begin{proof}
Let $m\in M\setminus\Crit(H,\Phi)$. Then we have by Equation \eqref{linear}
\begin{align*}
&\lim_{r\to\infty}\12\sum_{n\ge1}
\big[\big(f(\Phi/r)\circ\varphi_{-n}\big)(m)
-\big(f(\Phi/r)\circ\varphi_n\big)(m)\big]\\
&=\lim_{\nu\searrow0}\12\sum_{n\ge1}\textstyle
\Big[f\Big(\frac{\Phi(m)-n(\nabla H)(m)}r\Big)
-f\Big(\frac{\Phi(m)+n(\nabla H)(m)}r\Big)\Big],
\end{align*}
and the claim follows by Formula \eqref{formulette}.
\end{proof}

\section{Examples}\label{exemp}
\setcounter{equation}{0}

In this section we show that Assumption \ref{AssCom} is satisfied in various
situations. In these situations all the results of Section \ref{sec_ham} such as
Theorem \ref{IntCont} or Formula \eqref{E_derivative} hold. Some of the examples
presented here are the classical counterparts of examples discussed in
\cite[Sec.~7]{RT10} in the context of Hilbertian theory.

The configuration space of the system under consideration will sometimes be $\R^n$,
and the corresponding symplectic manifold $M=T^*\R^n\simeq\R^{2n}$. In that case, we
use the notation $(q,p)$, with $q\equiv(q^1,\ldots,q^n)$ and
$p\equiv(p_1,\ldots,p_n)$, for the canonical coordinates on $M$, and set
$\omega:=\sum_{j=1}^n\d q^j\wedge\d p_j$ for the canonical symplectic form. We always
assume that $f=\chi_1$ or that $f$ satisfies the hypotheses of Theorem \ref{IntCont}.

\subsection{$\boldsymbol{\nabla H=g(H)}$}

Suppose that there exists a function $g\equiv(g_1,\ldots,g_d)\in C^\infty(\R;\R^d)$
such that $\nabla H=g(H)$. Then $H$ and $\Phi$ satisfy Assumption \ref{AssCom} since
$\{g_j(H),H\}=0$ for each $j$. Furthermore, one has
$\Crit(H,\Phi)=(g\circ H)^{-1}(\{0\})$, and
$T_f=-\Phi\cdot(\nabla R_f)\big(g(H)\big)$ on $M\setminus\Crit(H,\Phi)$. We
distinguish various cases:
\begin{enumerate}
\renewcommand{\labelenumi}{{\normalfont (\Alph{enumi})}}

\item Suppose that $g$ is constant, \ie $g=v\in\R^d\setminus\{0\}$. Then 
$\Crit(H)=\Crit(H,\Phi)=\varnothing$, and we have the equality
$T_f=-\Phi\cdot(\nabla R_f)(v)$ on the whole of $M$.

Typical examples of functions $H$ and $\Phi$ fitting into this construction are
Friedrichs-type Hamiltonians and position functions. For illustration, we mention the
case (with $d=n$) of $H(q,p):=v\cdot p+V(q)$ and $\Phi(q,p):=q$ on $M:=\R^{2n}$, with
$v\in\R^n\setminus\{0\}$ and $V\in\cinf(\R^n;\R)$. In such a case, one has
$\nabla H=v$ and
$$
\textstyle
\varphi_t(q,p)=\big(vt+q,p-\int_0^t\d s\,(\nabla V)(vs+q)\big).
$$
Stark-type Hamiltonians and momentum functions also fit into the construction, \ie
$H(q,p):=h(p)+v\cdot q$ and $\Phi(q,p):=p$ on $M:=\R^{2n}$, with
$v\in\R^n\setminus\{0\}$ and $h\in \cinf(\R^n;\R)$. In such a case, one has
$\nabla H=-v$ and
$$
\textstyle
\varphi_t(q,p)=\big(q+\int_0^t\d s\,(\nabla h)(p-vs),p-vt\big).
$$
Note that these two examples are interesting since the Hamiltonians $H$ contain not
only a kinetic part, but also a potential perturbation.

\item Suppose that $\Phi$ has only one component ($d=1$), and assume that
$g(\lambda)=\lambda$ for all $\lambda\in\R$ (in the Hilbertian framework, one says in
such a case that $H$ is $\Phi$-homogeneous \cite{BG91}). Then
$\Crit(H,\Phi)=H^{-1}(\{0\})$ and we have the equality $T_f=-\Phi(\nabla R_f)(H)$ on
$M\setminus H^{-1}(\{0\})$. We present a general class of pairs $(H,\Phi)$ satisfying
these assumptions:

The Hamiltonian flow of the function $D(q,p):=q\cdot p$ on $\R^{2n}$ is given by
$\varphi^D_t(q,p)=(\e^tq,\e^{-t}p)$. So, $D$ is the generator of a dilations group on
$\R^{2n}$ (in the Hilbertian framework, the corresponding operator is the usual
generator of dilations on $\ltwo(\R^n)$, see \eg \cite[Sec.~1.2]{ABG}). Therefore,
the relation $\{D,H\}\propto H$ holds for a large class of homogeneous functions $H$
on $\R^{2n}$, due to Euler's homogeneous function theorem. Let us consider an
explicit situation. Take $\alpha>0$ and let $M$ be some open subset of
$(\R^n\setminus\{0\})\times\R^n$. Define on $M$ the function $\Phi:=\frac1\alpha D$
and the Hamiltonian $H(q,p):=h(p)+V(q)$, where $h\in\cinf(\R^n;\R)$ is positive
homogeneous of degree $\alpha$ and $V\in\cinf(\R^n\setminus\{0\};\R)$ is positive
homogeneous of degree $-\alpha$. Then one has $\nabla H\equiv\{\Phi,H\}=H$ on $M$,
and
\begin{align*}
\Crit(H)
&=\big\{(q,p)\in M\mid(\nabla h)(p)=(\nabla V)(q)=0\big\}\\
&\subset\big\{(q,p)\in M\mid p\cdot(\nabla h)(p)=q\cdot(\nabla V)(q)=0\big\}\\
&=\big\{(q,p)\in M\mid H(q,p)=0\big\}\\
&=\Crit(H,\Phi).
\end{align*}
Furthermore, if the functions $h$ and $V$ and the subset $M$ are well chosen, the
Hamiltonian vector field $X_H$ of $H$ is complete. For instance,
\begin{enumerate}
\item[(i)] If $V\equiv0$, then one can take $M=\R^{2n}$, and one has
$\varphi_t(q,p)=\big(q+t(\nabla h)(p),p\big)$ and
$$
\Crit(H)
=\big\{(q,p)\in M\mid(\nabla h)(p)=0\big\}
\subset\big\{(q,p)\in M\mid p\cdot(\nabla h)(p)=0\big\}
=\Crit(H,\Phi)
$$
(when $h(p)=\12|p|^2$ is the classical kinetic energy, one has
$\Crit(H)=\Crit(H,\Phi)=\R^n\times\{0\}$).

\item[(ii)] Let $K>0$. Then the Hamiltonian given by $H(q,p):=\12(|p|^2+K|q|^{-2})$
on $M:=\R^n\setminus\{0\}\times\R^n$ has a complete Hamiltonian vector field $X_H$.
To see it, we use the push-forward of $X_H$ by the diffeomorphism 
$\iota:\R^n\setminus\{0\}\times\R^n\to\R^n\setminus\{0\}\times\R^n$, 
$(q,p)\mapsto\big(q|q|^{-2},p\big)\equiv(r,p)$, namely,
$$
[\iota_*(X_H)](r,p)=\sum_j
\left(\big(|r|^2p_j-2(p\cdot r)r^j\big)\frac\partial{\partial r^j}\Big|_{(r,p)}
+Kr^j|r|^2\frac\partial{\partial p_j}\Big|_{(r,p)}\right).
$$
Then, we obtain that $\iota_*(X_H)$ is complete by using the criterion
\cite[Prop.~2.1.20]{AM78} with the proper function
$g:\R^n\setminus\{0\}\times\R^n\to[0,\infty)$ given by $g(r,p):=|p|^2+K|r|^2$. Since
$\iota$ is a diffeomorphism, this implies that $X_H$ is also complete (see
\cite[Lemma~1.6.4]{Jos05}).

\end{enumerate}

\item Many other examples with $\nabla H=g(H)$ can be obtained using homogeneous
Hamiltonians functions. For instance, consider $H(q,p):=q^2/q^1+q^1/q^2$ and
$\Phi(q,p):=p_1q^2+p_2q^1$ on $M:=(\R^2\setminus\{0\})\times\R^2$. Then one has
$\nabla H=H^2-4$, $\varphi_t(q,p)=\big(q,p-t\frac{\partial H}{\partial q}(q,p)\big)$
and
$$
\Crit(H)=\Crit(H,\Phi)=\big\{q\in\R^2\setminus\{0\}\mid q^1=\pm q^2\big\}\times\R^2.
$$
\end{enumerate}

\subsection{$\boldsymbol{H=h(p)}$}

Consider on $M:=\R^{2n}$ a purely kinetic Hamiltonian $H(q,p):=h(p)$ with
$h\in\cinf(\R^n;\R)$, and take the usual position functions $\Phi(q,p):=q$ with
$d=n$. Then $\varphi_t(q,p)=\big(q+t(\nabla h)(p),p\big)$, $\nabla H=\nabla h$, and
Assumption \ref{AssCom} is satisfied:
$$
\big\{\{\Phi_j,H\},H\big\}
=\big\{(\partial_jh)(p),h(p)\big\}
=0.
$$
In this example, we have $\Crit(H)=\Crit(H,\Phi)=\R^n\times(\nabla h)^{-1}(\{0\})$.

\subsection{The assumption $\boldsymbol{\{\{\Phi_j,H\},H\}=0}\,$ as a differential
equation}\label{S-EqDiff}

Consider on $M:=\R^{2n}$ an Hamiltonian function $H$ with partial derivatives
$H_{p_k}:=\partial H/\partial p_k$ and $H_{q^k}:=\partial H/\partial q^k$. Then,
finding the functions $\Phi_j$ of Assumption \ref{AssCom} amounts to solving for
$\Phi_0$ the second-order linear equation
$$
\big\{\{\Phi_0,H\},H\big\}
\equiv\bigg(\sum_{\ell=1}^n \big(H_{p_\ell}\partial_{q^\ell}
-H_{q^\ell}\partial_{p_\ell}\big)\bigg)^2\Phi_0
=0.
$$
As observed in Section \ref{Sec_Crit}, this is essentially equivalent (when $k$
independent first integrals $J_1\equiv H,J_2,\ldots,J_k$ are known) to find the
solutions $\Phi_0$ to
\begin{equation}\label{eq_diff}
\{\Phi_0,H\}
=\sum_{\ell=1}^n \big(H_{p_\ell}\partial_{q^\ell}
-H_{q^\ell}\partial_{p_\ell}\big)\Phi_0
=g(J_1,\ldots,J_k).
\end{equation}
The case $g\equiv1$ is sufficient, though trying to solve $\{\Phi_0,H\}=1$ can at
best provide solutions which are $\cinf$ outside the set $\Crit(H)$. A way to remove
these singularities could be to multiply the solutions by a function $g(H)$ that
vanishes and is infinitely flat on $\Crit(H)$. For instance, if $H\big(\Crit(H)\big)$
consists of a finite number of values $c_1,\ldots,c_s\in\R$, one could take 
$g(H)=\prod_{j=1}^s\e^{-(H-c_j)^{-2}}$. Another possibility is to restrict the study
to a submanifold $M'$ of $M$ (typically an open subset of the same dimension).
However, problems can arise as the same (induced) symplectic structure (or Poisson
bracket) must be used for the dynamic to remain unchanged; in particular, it must
checked that the Hamiltonian flow preserves $M'$.

\begin{enumerate}
\renewcommand{\labelenumi}{{\normalfont(\Alph{enumi})}}

\item Repulsive harmonic potential. In this example we first solve the equation
$\{\Phi_0,H\}=1$, and then correct the functions $\Phi_0$ to make them $\cinf$. So,
let us consider for $K\neq0$ the Hamiltonian $H(q,p):=\12\big(|p|^2-K^2|q|^2\big)$ on
$M:=\R^{2n}$. One can check that $\Crit(H)=\{0\}$ and that
$$
\textstyle
\varphi_t(q,p)=\big(\frac{Kq+p}{2K}\;\!\e^{Kt}+\frac{Kq-p}{2K}\;\!\e^{-Kt},
\frac{Kq+p}2\;\!\e^{Kt}-\frac{Kq-p}2\;\!\e^{-Kt}\big).
$$
For $j\in\{1,\ldots,n\}$, take $\Phi_j(q,p):=\frac1K\tanh^{-1}(Kq^j/p_j)$, where
$\tanh^{-1}(z)\equiv\12\ln\big|\frac{1+z}{1-z}\big|$ is $\cinf$ on
$\R\setminus\{\pm1\}$. Whenever $p_j=\pm Kq^j$, the $\Phi_j$ are not well-defined,
but outside these regions, they satisfy $\{\Phi_j,H\} =1$. It is possible in this
case to get rid of the singular regions. Indeed, the functions
$H_j(q,p):=\12\big(p_j^2-K^2(q^j)^2\big)$ are first integrals of the motion and the
singular regions correspond to the level sets $H_j^{-1}(\{0\})$. Therefore, the
functions $\Phi'_j:=\e^{-H_j^{-2}}\Phi_j$ are well-defined and satisfy Assumption
\ref{AssCom}:
$$
\big\{\{\Phi_j',H\},H\big\}=\big\{\e^{-H_j^{-2}},H\big\}=0.
$$
In this example, one has
$\{0\}=\Crit(H)\subsetneq\Crit(H,\Phi')=\bigcap_jH_j^{-1}(\{0\})$.

\item Simple pendulum. In this example we first consider the dynamics on a manifold
and then restrict it to an appropriate submanifold. For $K>0$, take
$H(q,p):=\12\big(p^2+K(1-\cos q)\big)$ on $M:=\R^2$. One has
$\Crit(H)=\pi\Z\times\{0\}$ (the values $q\in2\pi\Z$ correspond to minima, while
$q\in2\pi\Z+\pi$ correspond to inflexion points). Then, consider the open subset $M'$
of $M$ defined by the relation $H>K$, \ie
$M':=\big\{(q,p)\in\R^2\mid p^2/2-K\cos^2(q/2)>0\big\}$. One verifies easily that
$M'$ is preserved by the Hamiltonian flow, that $M'\cap\Crit(H)=\varnothing$ and that
$M'$ corresponds to the region where the values of $q$ along an orbit cover all of
$\R$. Define also
$$
\Phi(q,p):=\sqrt{\frac2{H(q,p)}}\;\!F\big(q/2\big|\sqrt{K/H(q,p)}\big)
\equiv\sqrt2\int_0^{q/2}\frac{\d\vartheta}{\sqrt{H(q,p)-K\sin^2(\vartheta)}}\;\!,
$$
where $F(\;\!\cdot\;\!|\;\!\cdot\;\!)$ denotes the incomplete elliptic integral of
the first kind. Then one verifies that the function $\Phi$ is well-defined on $M'$
and a direct calculation gives $\{\Phi,H\}(q,p)=p/|p|$ for each $(q,p)\in M'$. Now,
$p/|p|=1$ on one connected component of $M'$ and $p/|p|=-1$ on the other one. Thus
Assumption \ref{AssCom} is verified on $M'$ and $\Crit(H,\Phi)=\varnothing$.

\item Unbounded trajectories of central force systems. Once again, we first consider
the dynamics on a manifold and then restrict it to an appropriate submanifold. For
$K\in\R\setminus\{0\}$, take $H(q,p):=\12\big(|p|^2-K|q|^{-1}\big)$ on 
$M:=(\R^n\setminus\{0\})\times\R^n$, with $n>1$ if  $K>0$ and $n\geq1$ if $K<0$. One
has $\Crit(H)=\varnothing$.

When $K>0$ (and $n>1$), we must restrict our attention to the case where the
Hamiltonian function $H$ is positive (to avoid periodic orbits), and where at least
one of the two-dimensional angular momenta $L_{ij}(q,p):=q^ip_j-q^jp_i$ is nonzero
(to avoid collisions, \ie orbits whose flow is not defined for all $t\in\R$, see
\cite{OV06}). Therefore, the open set
$
M':=\big\{(q,p)\in M|H(p,q)>0,\,\sum_{i,j=1}^n|L_{ij}(q,p)|^2\neq0\big\}
$
is an appropriate submanifold of $M$ when $K>0$. 

Consider now the real valued functions on $M$ (resp. $M'$) when $K<0$ (resp. $K>0$
and $n>1$) given by
$$ 
\Phi_\pm(q,p):=\frac{p\cdot q}{2H(q,p)}
\mp\frac K{2\big(2H(q,p)\big)^{3/2}}\;\!\ln\Big(|q|\big(2H(q,p)+|p|^2\big)
\pm2\sqrt{2H(q,p)}\;\!p\cdot q\Big).
$$
Since $|p|^2<2H(q,p)$ (resp. $|p|^2>2H(q,p)$), then
\begin{align*}
\big(\sqrt{2H(q,p)}-|p|\big)^2>0
&\Longrightarrow 2H(q,p)|p|^2\pm2\sqrt{2H(q,p)}\;\!p\cdot\frac q{|q|}>0\\
&\hspace{-5pt}\iff|q|\big(2H(q,p)+|p|^2\big)\pm2\sqrt{2H(q,p)}\;\!p\cdot q>0.
\end{align*}
So, $\Phi_\pm$ are well-defined, and further calculations show that
$\{\Phi_\pm,H\}=1$ on $M$ (resp. $M'$). As before,
$\Crit(H)=\Crit(H,\Phi_\pm)=\varnothing$. Note that $\Phi_\pm(q,p)=p\cdot q/|p|^2$
when $K=0$, which is coherent with the canonical function $\Phi$ for the purely
kinetic Hamiltonian $H(q,p)=\12|p|^2$.

One can construct a more intuitive function $\Phi_0$ in terms of $\Phi_\pm$, namely,
$$
\Phi_0(q,p)
:=\12(\Phi_++\Phi_-)(q,p)
=\frac{p\cdot q}{2H(q,p)}-\frac K{2\big(2H(q,p)\big)^{3/2}}\;\!
\tanh^{-1}\bigg(\frac{2\sqrt{2H(q,p)}\;\!p\cdot q}{|q|\big(2H(q,p)+|p|^2\big)}\bigg),
$$
which also satisfies $\{\Phi_0,H\}=1$. Since the functions satisfying Assumption
\ref{AssCom} are linear in $t$, one can regard them as inverse functions for the
flow. The appearance of the inverse hyperbolic function $\tanh^{-1}$ in
$\Phi_0$ is related to the fact that unbounded trajectories of the central force
system given by $H>0$ are hyperbolas.  

\item Poincar\'e ball model. Consider $B_1:=\big\{q\in\R^n\mid|q|<1\big\}$ endowed
with the Riemannian metric $g$ given by
$$
g_q(X_q,Y_q):=\frac4{(1-|q|^2)^2}\;\!(X_q\cdot Y_q),\quad
q\in B_1,~X_q,Y_q\in T_qB_1\simeq\R^n.
$$
Let $M:=T^*B_1\simeq\big\{(q,p)\in B_1\times\R^n\big\}$ be the cotangent bundle on
$B_1$ with symplectic form $\omega:=\sum_{j=1}^n\d q^j\wedge\d p_j$, and let
$$
H:M\to\R,\quad(q,p)\mapsto\12\sum_{j,k=1}^ng^{jk}(q)p_jp_k
={\textstyle \frac18}|p|^2\big(1-|q|^2\big)^2
$$
be the kinetic energy Hamiltonian. It is known that the integral curves of the vector
field $X_H$ correspond to the geodesics curves of $(B_1,g)$ (see
\cite[Thm.~1.6.3]{Jos05} or \cite[Sec.~6.4]{CC05}). Since, $(B_1,g)$ is geodesically
complete (see Proposition 3.5 and Exercice 6.5 of \cite{Lee97}), this implies that
$X_H$ is complete. There remains only to find a function $\Phi$ satisfying Assumption
\ref{AssCom} in order to apply the theory.

Some calculations using spherical-type coordinates suggest the function
$$
\Phi:M\to\R,\quad(q,p)\mapsto\e^{-1/H(q,p)}
\tanh^{-1}\left(\frac{(p\cdot q)(1-|q|^2)}{\sqrt{2H(q,p)}(1+|q|^2)}\right).
$$
Indeed, since
$$
\left|\frac{(p\cdot q)(1-|q|^2)}{\sqrt{2H(q,p)}(1+|q|^2)}\right|
=\left|\frac{2(p\cdot q)}{|p|(1+|q|^2)}\right|
\leq\frac{2|q|}{1+|q|^2}
<1,
$$
the function $\Phi$ is well-defined. Furthermore, direct calculations show that
$\Phi$ is $\cinf$ and that $\{\Phi,H\}=\e^{-1/H}\sqrt{2H}$. Therefore, Assumption
\ref{AssCom} is verified and one has $\Crit(H)=\Crit(H,\Phi)=B_1\times\{0\}$.

In one dimension, $q(t):=\tanh(t)$ is (up to speed and direction) the only geodesic
curve, and
$$
\Phi(q,p)
=\e^{-1/H(q,p)}\tanh^{-1}\left(\frac{2pq}{|p|(1+q^2)}\right)
=2\e^{-1/H(q,p)}\frac p{|p|}\tanh^{-1}(q).
$$
So, apart from the smoothing factor $2\e^{-1/H}$, our $\Phi$ coincides in one
dimension with the inverse function of the flow.
\end{enumerate}

\subsection{Passing to a covering manifold}

In this subsection we briefly discuss a way of avoiding the obstruction of periodic
orbits: Given $M$ a symplectic manifold with symplectic form $\omega$ and Hamiltonian
$H$, we let $\pi:\widetilde M\to M\setminus\Crit(H)$ be $\cinf$-covering manifold. In
order to preserve the dynamics, we endow the manifold $\widetilde M$ with the
pullback $\widetilde\omega:=\pi^*\omega$ of the symplectic form $\omega$ and with the
pullback $\widetilde H:=\pi^*H$ of the Hamiltonian $H$.\footnote{If one wants to
consider only a Poisson manifold $M$, a Poisson structure can also be defined on
$\widetilde{M}$ given that $\pi$ is $\cinf$. Indeed, for
$U\subset M\setminus\Crit(H)$ a sufficiently small open set (\ie such that
$\pi^{-1}(U)$ is a disjoint union of diffeomorphic copies), connected components of
$\pi^{-1}(U)$ are diffeomorphic to $U$ and the Poisson structure can be induced by
this diffeomorphism.}

Here are two simple examples of finite-dimensional symplectic covering manifolds.

\begin{enumerate}
\renewcommand{\labelenumi}{{\normalfont(\Alph{enumi})}}

\item Consider on the sphere $M:=\mathbb S^2$ (as seen in $\R^3$ and with its
standard symplectic structure) the Hamiltonian $H$ given by the projection onto the
$z$-coordinate. Outside the $2$ polar critical points, all the orbits are periodic:
the flow corresponds to rotations around the $z$-axis. In this case, one can use the
covering of $\mathbb S^2\setminus\{(0,0,\pm1)\}$ given by
$\widetilde M:=\big\{(\vartheta,z)\mid\vartheta\in\R,~z\in(-1,1)\}$ and the covering
map
\begin{align*}
\pi:\widetilde M\to M\setminus\Crit(H)
\equiv\mathbb S^2\setminus\{(0,0,\pm1)\},\quad
(\vartheta,z)\mapsto\big(\sqrt{1-z^2}\;\!\cos(\vartheta),
\sqrt{1-z^2}\;\!\sin(\vartheta),z\big).
\end{align*}
Consequently, $\widetilde H:\widetilde M\to(-1,1)$ is the projection onto the
$z$-coordinate and $\widetilde\omega=\d\vartheta\wedge\d z$. One can also check that
$\varphi_t(\vartheta,z):=(\vartheta+t,z)$ is the flow of $\widetilde H$ and that
$\big\{\Phi,\widetilde H\big\}=1$ for $\Phi(\vartheta,z):=\vartheta$. So, Assumption
\ref{AssCom} is verified on $\widetilde M$ and
$\Crit(\widetilde H)=\Crit(\widetilde H,\Phi)=\varnothing$.

\item Harmonic oscillator. Consider on $M:=\R^{2n}$ (with its standard symplectic
structure) the Hamiltonian given by $H(q,p):=\12\big(|p|^2+K^2|q|^2\big)$, where
$K\in\R\setminus\{0\}$. Define
$\widetilde M:=\big\{(r,\vartheta)\mid r\in(0,\infty)^n,\,\vartheta\in\R^n\big\}$ and
$\pi:\widetilde M\to M\setminus\Crit(H)\equiv\R^{2n}\setminus\{0\}$, with
$$
\pi(r,\vartheta)
:=\big(K^{-1}r_1\cos(\vartheta_1),\ldots,K^{-1}r_n\cos(\vartheta_n),
r_1\sin(\vartheta_1),\ldots,r_n\sin(\vartheta_n)\big).
$$
Then $\widetilde H(r,\vartheta)=\12|r|^2$,
$\widetilde\omega=K^{-1}\sum_{j=1}^nr_j\;\!\d r_j\wedge \d\theta_j$, and
$\varphi_t(r,\vartheta)=(r,\vartheta-Kt)$ is the flow of $\widetilde H$. Furthermore,
one has $\big\{\Phi_j,\widetilde H\big\}=-K$ for each function
$\Phi_j(r,\vartheta):=\vartheta_j$. Therefore, Assumption \ref{AssCom} is verified on
$\widetilde M$ with $\Phi\equiv(\Phi_1,\ldots,\Phi_n)$ and
$\Crit(\widetilde H)=\Crit(\widetilde H,\Phi)=\varnothing$.
\end{enumerate}

\subsection{Infinite dimensional Hamiltonian systems}

\subsubsection{Classical systems}

In the following examples, the infinite dimensional manifold $M$ is either
$\ltwo(\R)$ or $\ltwo(\R)\oplus\ltwo(\R)$ (equivalence classes of real valued square
integrable functions)\footnote{In the case of the wave and the Schr\"odinger
equations below, one can easily extend the results to the situation where $\ltwo(\R)$
is replaced by $\ltwo(\R^n)$. We restrict ourselves to the case $n=1$ for the sake of
notational simplicity.}. The atlas of $M$ consists in only one chart, the tangent
space $T_uM$ at a point $u\in M$ is isomorphic to $M$, and the Riemannian metric on
$M$ is flat (\ie independent of the base point in $M$) and given by the usual scalar
product $\langle\;\!\cdot\;\!,\;\!\cdot\;\!\rangle$ on $\ltwo(\R)$ or
$\ltwo(\R)\oplus\ltwo(\R)$.

To define the symplectic form on $M$ in terms of the metric
$\langle\;\!\cdot\;\!,\;\!\cdot\;\!\rangle$ we let $\H^s$, $s\in\R$, denote the real
Sobolev space $\H^s(\R)$ or $\H^s(\R)\oplus\H^s(\R)$ (see \cite[Sec.~4.1]{ABG} for
the definition in the complex case) and we let $\S$ denote the real Schwartz space
$\S(\R)$ or $\S(\R)\oplus\S(\R)$. Then we consider an operator $J:\S\to\S$ (which can
be interpreted by continuity as an endomorphism of the tangent spaces $T_uM\simeq M$)
satisfying the following:
\begin{enumerate}
\item[(i)] There exists a number $d_J\geq0$, called the order of $J$, such that for
each $s\in\R$ the operator $J$ extends to an isomorphism $\H^s\to\H^{s-d_J}$ (which
we denote by the same symbol).
\item[(ii)] $J$ is antisymmetric on $\S$, \ie
$\langle Jf,g\rangle=-\langle f,Jg\rangle$ for all $f,g\in\S$.
\end{enumerate}
It is known \cite[Lemma~1.1]{Kuk93} that the operator
$\bar J:=-J^{-1}:M\to\H^{d_J}$ (of order $-d_J$) is bounded and anti-selfadjoint in
$M$. In consequence, for each $s\geq0$ the map $\omega:\H^s\times\H^s\to\R$ given by
$$
\omega(f,g):=-\big\langle\bar Jf,g\big\rangle
$$
defines a symplectic form on $\H^s$.

The functions on the phase space (such as $H$ or $\Phi_j$) are infinitely Fr\'echet
differentiable mappings from $\O_{s_H}$ (a subset of $\H^{s_H}$ for some $s_H\geq0$)
to $\R$, \ie elements of $\cinf(\O_{s_H};\R)$. The Hamiltonian function
$H\in\cinf(\O_{s_H};\R)$ is defined as follows: for some $h\in\cinf(\R^{k+1};\R)$ (or
$h\in\cinf(\R^{2(k+1)};\R)$ if $M=\ltwo(\R)\oplus\ltwo(\R)$), one has for each
$u\in\O_{s_H}$
$$
H(u):=\int_\R\d x\,h(u_0,u_1,\ldots,u_{s_H}),
$$
where $u_j:=\frac{\d^ju}{\d x^j}$\;\!. Since $H\in\cinf(\O_{s_H};\R)$, the
differential of $H$ at $u\in\O_{s_H}$ on a tangent vector
$f\in\S\subset M \simeq T_u M$ is given by
$$
\d H_u(f)
=\lim_{t\to0}\frac1t\big[H(u+tf)-H(u)\big]
=\int_\R\d x\;\!\sum_{j=0}^{s_H}\frac{\partial h}{\partial u_j}\frac{\d^jf}{\d x^j}
=\sum_{j=0}^{s_H}\int_\R\d x\,(-1)^jf\;\!\frac{\d^j}{\d x^j}
\frac{\partial h}{\partial u_j}\;\!,
$$
where the second equality is obtained using integrations by parts (with vanishing
boundary contributions). The (Riemannian) gradient vector field $\grad H$ associated
to the linear functional $\d H$ satisfies by definition
$\big\langle(\grad H)(u),f\big\rangle=\d H_u(f)$ for all $u\in\O_{s_H}$ and $f\in\S$
(here $(\grad H)(u)$ \apriori only belongs to the topological dual $\S^*$ of $\S$,
which means that $\langle\;\!\cdot\;\!,\;\!\cdot\;\!\rangle$ denotes \apriori the
duality map between $\S^*$ and $\S$). So, $(\grad H)(u)$ is given by
\begin{equation}\label{gradienf}
(\grad H)(u)
=\sum_{j=0}^{s_H}(-1)^j\frac{\d^j}{\d x^j}\frac{\partial h}{\partial u_j}\;\!.
\end{equation}
Then, the Hamiltonian vector field $X_H$ is the map $\O_{s_H}\to\S^*$ satisfying
$$
\big\langle\bar Jf,X_H(u)\big\rangle
=-\omega\big(f, X_H(u)\big)
=\d H_u(f)
=\big\langle f,(\grad H)(u)\big\rangle
$$
for all $u\in\O_{s_H}$ and $f\in\S$. Since $\bar J$ is anti-selfadjoint, this implies
that $\bar JX_H(u)=-(\grad H)(u)$ in $\S^*$, which is equivalent to
$X_H(u)=J(\grad H)(u)$ in $\S^*$. So, the equation of motion with Hamiltonian $H$ has
the form $\ddt\;\!u=J(\grad H)(u)$, and
$\{\Phi,H\}=\d\Phi(X_H)=\big\langle\grad\Phi,J(\grad H)\big\rangle$ for all functions
$\Phi,H\in\cinf(\O_{s_H};\R)$ with appropriate gradient.

Before passing to concrete examples, we refer to \cite{Kat75} for standard results on
the local existence in time of Hamiltonian flows (global existence is specific to the
system considered).

\begin{enumerate}
\renewcommand{\labelenumi}{{\normalfont(\Alph{enumi})}}

\item The wave equation. We refer to \cite[Ex.~5.5.1]{AM78},
\cite[Ex.~8.1.12]{AMR88}, \cite[Sec.~2.1]{CM74} and \cite[Sec.~X.13]{RS75} for a
description of the model. The existence of the flow for all times depends on the
nonlinear term in the Hamiltonian (see for instance \cite[Thm.~X.74]{RS75} and the
corollary that follows).

In this example, the scale $\{\H^s\}_{s\geq0}$ is given by
$\H^s:=\H^s(\R)\oplus\H^s(\R)$. The metric on $M:=\ltwo(\R)\oplus\ltwo(\R)$ is given
for each $(p,q),(\widetilde p,\widetilde q)\in M$ by
$
\big\langle(p,q),(\widetilde p,\widetilde q)\big\rangle
:=\int_\R\d x\,(p\widetilde p+q\widetilde q)
$,
and the operator $J$ is given by
$$
J:M\to M,\quad(p,q)\mapsto(-q,p).
$$
It is an isomorphism of degree $0$ with $\bar J=J$. Given $m\geq0$ and
$F\in\cinf(\R;\R)$, one can find a subset $\O_1\subset\H^1$ (depending on $F$) such
that the Hamiltonian function
$$
H:\O_1\to\R,\quad(p,q)\mapsto\int_\R\d x\,h(p,q,\partial_xq)
\equiv\12\int_\R\d x\,\big\{p^2+(\partial_xq)^2+m^2q^2+2F(q)\big\},
$$
is well-defined and $\cinf$. In fact, we assume that $\O_1$ is chosen such that (i)
all the functions on the phase space appearing below are elements of
$\cinf(\O_1;\R)$, and (ii) integrations by parts involving these functions come
vanishing boundary contributions. Then one checks that
$(\grad H)(p,q)=\big(p,m^2q+F'(q)-\partial_x^2q\big)$ due to \eqref{gradienf}, and
that $X_H(p,q)$ is trivial if and only if $p=0$ and $m^2q+F'(q)-\partial_x^2q=0$. The
constraint on $q$ depends on the choice of $F$. For example, when $F(q)=0,q$ or
$q^2$, the solution $q$ of the differential equation does not decay as
$|x|\to\infty$. In consequence, the corresponding pairs $(p,q)$ cannot belong to $M$,
and $\Crit(H)=\{(0,0)\}$. The equation of motion
\begin{equation}\label{eq_motion}
\ddt\;\!(p,q)=J(\grad H)(p,q)
\end{equation}
coincides with the usual the wave equation since the combination of
$\ddt p=\partial_x^2q-m^2q-F'(q)$ and $\ddt q=p$ gives
$$
\frac{\d^2}{\d t^2}\;\!q=\partial_x^2q-m^2q-F'(q).
$$
When $m\neq0$, this equation is called the Klein-Gordon equation, and $F$ is usually
assumed to be a nonlinear term of the form $F(q)=q^\lambda$ for some $\lambda\in\R$.
A first relevant observation is that the function $C_0\in\cinf(\O_1;\R)$ given by
$C_0(p,q):=\int_\R\d x\,p(\partial_x q)$ is a first integral of the motion.
Furthermore, the function $\Phi_0\in\cinf(\O_1;\R)$ given by
$\Phi_0(p,q):=\int_\R\d x\,\id_\R h(p,q,\partial_x q)$ has gradient
$
(\grad \Phi_0)(p,q)
=\big(\id_\R p,\id_\R m^2q+\id_\R F'(q)-\partial_x(\id_\R\partial_xq)\big).
$
Therefore,
$$
\{\Phi_0,H\}(p,q)
=\big\langle(\grad\Phi_0)(p,q),J(\grad H)(p,q)\big\rangle
=\int_\R\d x\,p\;\!\big\{\id_\R\partial_x^2q-\partial_x(\id_\R\partial_xq)\big\}
=-C_0(p,q),
$$
and $\Phi_0$ satisfies Assumption \ref{AssCom}. Here, we clearly have
$$
\textstyle
\Crit(H,\Phi_0)
=C_0^{-1}(\{0\})
=\big\{(p,q)\in\O_1\mid\int_\R p(\partial_xq)\,\d x=0\big\}
\supsetneq\{(0,0)\}
=\Crit(H).
$$ 
If we assume further that $F\equiv0$, then the equation of motion \eqref{eq_motion}
is linear. Therefore any pair $(\partial_x^jp,\partial_x^j q)$, $j\geq1$, with
$(p,q)$ a solution of \eqref{eq_motion}, also satisfies \eqref{eq_motion}.
Consequently, if the subsets $\O_j\subset\H^j$ have properties similar to the ones of
$\O_1$, then the functions $C_j\in\cinf(\O_{j+1};\R)$ and $H_j\in\cinf(\O_{j+1};\R)$
given by
$
C_j(p,q):=\int_\R\d x\,\big(\partial_x^jp\big)\big(\partial_x^{j+1}q\big)
$
and
$
H_j(p,q):=\int_\R\d x\,h\big(\partial_x^jp,\partial_x^jq,\partial_x^{j+1}q\big)
$
are first integrals of the motion. Accordingly, one deduces that the functions
$\Phi_j\in\cinf(\O_{j+1};\R)$ given by
$
\Phi_j(p,q)
:=\int_\R\d x\,\id_\R h\big(\partial_x^jp,\partial_x^jq,\partial_x^{j+1}q\big)
$
satisfy $\{\Phi_j,H\}=-C_j$ on $\O_{j+1}$. So, if $F\equiv0$,
there is an infinite family of functions $\Phi_j$ satisfying Assumption \ref{AssCom},
and one has again $\Crit(H,\Phi_j)\supsetneq\Crit(H)$, with
$\partial_x^j:\Crit(H,\Phi_j)\to\Crit(H,\Phi_0)$ an isomorphism.

Finally, when $F\equiv0$ and $m=0$ one can check that the function
$\widetilde\Phi_0\in\cinf(\O_1;\R)$ given by
$\widetilde\Phi_0(p,q):=\int_\R\d x\,\id_\R p(\partial_xq)$ has gradient
$
\big(\grad\widetilde\Phi_0\big)(p,q)=(\id_\R\partial_xq,-\id_\R\partial_xp-p).
$
Then,
$$
\big\{\widetilde\Phi_0,H\big\}(p,q)
=\int_\R\d x\,
\big(\id_\R(\partial_xq)(\partial_x^2q)-\id_\R p\;\!\partial_xp-p^2\big)
=-\12\int_\R\d x\,\big((\partial_x q)^2+p^2\big)
=-H(p,q),
$$
where the third equality is obtained using integrations by parts (with vanishing
boundary contributions). Thus $\widetilde\Phi_0$ satisfies Assumption \ref{AssCom}.
Furthermore, since $\{\widetilde\Phi_0,H\}(p,q)= 0$ implies
$\int_\R\d x\,\big\{(\partial_x q)^2+p^2\big\}=0$, one has
$\Crit(H,\widetilde\Phi_0)=\Crit(H)=\{(0,0)\}$. As before, any derivative of a
solution of the equation of motion is still a solution of the equation of motion. So,
it can be checked that the functions $\widetilde\Phi_j\in\cinf(\O_{j+1};\R)$ given by
$
\widetilde\Phi_j(p,q)
:=\int_\R\d x\,\id_\R\big(\partial_x^j p\big)\big(\partial_x^{j+1}q\big)
$
satisfy $\{\widetilde\Phi_j,H\}=-H_j$ on $\O_{j+1}$. Therefore, one has once again
$\Crit(H,\widetilde\Phi_j)=\Crit(H)=\{(0,0)\}$ and the $\widetilde\Phi_j$'s
constitutes a second infinite family of functions satisfying Assumption \ref{AssCom}.

\item The nonlinear Schr{\"o}dinger equation. We refer to
\cite[Ex.~1.3,~p.~3\,\&\,5]{Kuk93} for a description of the model. The existence of
the flow for all times depends on the nonlinear term in the Hamiltonian (see for
instance \cite[Sec.~I.2]{Bou99} and \cite[Sec.~3.2.2-3.2.3]{Sul99}).

The setting is the same as that of the previous example, except that the Hamiltonian
function $H\in\cinf(\O_1;\R)$ is given by
$$
H(p,q):=\12\int_\R\d x\,\big\{(\partial_xp)^2+(\partial_xq)^2+V\cdot(p^2+q^2)
+F(p^2+q^2)\big\},
$$
where $V,F\in\cinf(\R;\R)$. Using \eqref{gradienf}, one checks that the gradient of $H$
at $(p,q)\in\O_1$ is
$$
(\grad H)(p,q)
=\big(-\partial_x^2p+Vp+pF'(p^2+q^2),-\partial_x^2q+Vq+qF'(p^2 +q^2)\big).
$$
So, the equation of motion $\ddt(p,q)=J({\rm grad}H)(p,q)$ is equivalent to the
nonlinear Schr\"odinger equation
\begin{equation}\label{NLS}
\ddt\;\!u=i\big(-\partial_x^2u+Vu+uF'(|u|^2)\big),
\end{equation}
with $u:=p+iq$. Without additional assumptions on $F$ or $V$, it is hardly possible
to determine the set $\Crit(H)$ of functions $u$ for which the r.h.s. of \eqref{NLS}
vanishes. However, it is known that in general $\Crit(H)$ is not trivial, as in the
case of elliptic stationary nonlinear Schr\"odinger equations (see Theorem 1.1 and
Proposition 1.1 of \cite{BL90}).

Now, assume that $V\equiv F\equiv0$ and for each $j\geq1$ let $\O_j\subset\H^j$ be a
subset having properties similar to the ones of $\O_1$. Then the functions
$H_j\in\cinf(\O_j;\R)$ and $C_j\in\cinf(\O_{j+1};\R)$ given by
$
H_j(p,q)
:=\12\int_\R\d x\,\big\{(\partial_x^jq)^2+(\partial_x^jp)^2\big\}
\equiv\int_\R\d x\,h_j(p,q)
$
and
$
C_j(p,q)
:=\int_\R\d x\,\big\{(\partial_x^jq)(\partial_x^{j+1}p)
-(\partial_x^{j+1}q)(\partial_x^j p)\big\}
\equiv\int_\R\d x\,c_j(p,q)
$
are first integrals of the motion. Furthermore, the functions
$\Phi_j\in\cinf(\O_j;\R)$ and $\widetilde\Phi_j\in\cinf(\O_{j+1};\R)$ given by
$\Phi_j(p,q):=\int_\R\d x\, \id_\R h_j(p,q)$ and
$\widetilde\Phi_j(p,q):=\int_\R\d x\,\id_\R c_j(p,q)$ satisfy $\{\Phi_j,H\}=C_j$ and 
$\{\widetilde\Phi_j,H\}=4H_{j+1}$ on $\O_{j+1}$. So, the $\Phi_j$'s and the
$\widetilde\Phi_j$'s constitute two infinite families of functions satisfying
Assumption \ref{AssCom}. Note that the sets
$
\Crit(H,\Phi_j)
=C_j^{-1}(\{0\})
=\big\{(p,q)\in\O_{j+1}\mid\int_\R\d x\,(\partial_x^j q)(\partial_x^{j+1}p)= 0\big\}
$
(with isomorphisms $\partial_x^j:\Crit(H,\Phi_j)\to\Crit(H,\Phi_0)$) are rather
large, whereas $\Crit(H,\widetilde\Phi_j)=\Crit(H)=\{(0,0)\}$.

Some of the above functions still work when $V$ and $F$ are not trivial. For
instance, the identity $\{\Phi_0,H\}=C_0$ on $\O_1$ remains valid for all $V$ and
$F$. Furthermore, if $V={\rm Const.}$, then $\{C_0,H\}=0$ on $\O_1$. Consequently,
$\Phi_0$ satisfies Assumption \ref{AssCom} for all $F$ and for $V={\rm Const.}$, and
one has $\Crit(H,\Phi_0)\supsetneq\Crit(H)$. This last example is interesting since
it applies to a large class of nonlinear Schr{\"o}dinger equations.

\item The Korteweg-de Vries equation. Among many other possible references, we
mention \cite[Ex.~5.5.7]{AM78} and \cite[Ex.~1.4,~p.~3\,\&\,5]{Kuk93}. For the global
existence of the flow, we refer the reader to \cite[Sec.~1]{CKSTT03} and references
therein.

In this example, the scale $\{\H^s\}_{s\geq0}$ is given by $\H^s:=\H^s(\R)$ and the
sets $\O_j$, $j\in\N$, are appropriate subsets of $\H^j$. The Hamiltonian function
$H\in\cinf(\O_1;\R)$ is given by
$$
H(u):=\int_\R\d x\,\big( \12(\partial_xu)^2+u^3 \big),
$$
and the isomorphism $J:=\partial_x$ is of order $1$.

The gradient of $H$ at $u\in\O_1$ is $-\partial_x^2u+3u^2$. So, the elements of
$\Crit(H)$ are functions $u$ satisfying $-\partial_x^2u+3u^2=0$; these are
Weierstrass $\wp$-functions \cite[Sec.~134.F]{Kiy87}, that is, functions with many
singularities and no decay at infinity. Thus, $\Crit(H)=\{0\}$. Furthermore, the
equation of motion $\ddt u=J(\grad H)(u)$ coincides with the KdV equation
$\ddt u=\partial_x\big(-\partial_x^2 u + 3u^2\big)$.

There exists an infinite number of first integrals of the motion with polynomial
density, that is, of the form $H_j:=\int_\R\d x\,h_j$, where $h_j$ is a finite
polynomial in $u$ and its derivatives (see \cite[Sec.~3]{MGK68}). For example, when
$h_1(u)=u$, $h_2(u)=u^2$, $h_3(u)=\12(\partial_xu)^2+u^3$, or 
$h_4(u)=(\partial_x^2u)^2+10u(\partial_xu)^2+5u^4$. So, let $\Phi_0\in\cinf(\O_0;\R)$
be given by $\Phi_0(u):=\int_\R\d x\,\id_\R u$. Then the gradient of $\Phi_0$ at $u$
is $\id_\R$, and $\{\Phi_0,H\}=-3H_2$ on $\O_1$. Since $H_2$ is a first integral of
the motion, this implies that $\Phi_0$ satisfies Assumption \ref{AssCom}.
Furthermore, the fact that $H_2(u)=\|u\|_{\ltwo(\R)}$ implies that
$\Crit(H,\Phi_0)=\{0\}=\Crit(H)$.

Looking for others $\Phi$ of the form
$\Phi(u)=\int_\R\d x\,g(x)\;\!G(u,\partial_xu,\ldots,\partial_x^ku)$, with $G$ a
polynomial and $g$ a $\cinf$ function, is unnecessay. Indeed, both $\{\Phi,H\}$ and
$\Upsilon(t):=\Phi-t\{\Phi,H\}$ are first integrals of the motion with density
$\cinf$ in $x$ and polynomial in $u$ and its derivatives (and $t$-linear in the case
of $\Upsilon$). Thus, we know from \cite[Thm.~1~\&~Rem.~3]{SW97} that they are
completely characterised, up to the usual equivalence of conservation laws
\cite[Sec.~4.3]{Olv93}. Therefore, the functions $\Phi$ are also completely
characterised. Note however, that it is not excluded that functions $\Phi$ with an
integrand $G$ involving fractional derivatives, an infinite number of derivatives, or
of class $\cinf$ might work. Non-polynomial conserved densities are known to exist in
the periodic case (see \cite[Sec.~5]{MGK68}). 

\end{enumerate}

\subsubsection{Quantum systems} \label{qsys}

Let $\H$ be a complex Hilbert space, with scalar product
$\langle\;\!\cdot\;\!,\;\!\cdot\;\!\rangle$ antilinear in the left entry. Define on
$\H$ the usual quantum-mechanical symplectic form
$$
\omega:\H\times\H\to\R,\quad(\psi_1,\psi_2)\mapsto2\im\langle\psi_1,\psi_2\rangle.
$$
The pair $(\H,\omega)$ has the structure of an (infinite-dimensional) symplectic
vector space. Now, define for any bounded selfadjoint operator $H_{\rm op}\in\B(\H)$
the expectation value Hamiltonian function
$$
H:\H\to\R,\quad
\psi\mapsto\langle H_{\rm op}\rangle(\psi)
:=\langle\psi,H_{\rm op}\psi\rangle.
$$
Then, it is known \cite[Cor.~2.5.2]{MR99} that the vector field and the flow
associated to $H$ are $X_H=-iH_{\rm op}$ and $\varphi_t(\psi)=\e^{-itH_{\rm op}}\psi$.
Therefore, the Poisson bracket of two such Hamiltonian functions $H,K$ satisfies for
each $\psi\in\H$
$$
\{K,H\}(\psi)
=\omega\big(X_K(\psi),X_H(\psi)\big)
=-\omega(K_{\rm op}\psi,H_{\rm op}\psi)
=\big\langle\psi,i[K_{\rm op},H_{\rm op}]\psi\big\rangle.
$$
So, in this framework, verifying Assumption \ref{AssCom} amounts to finding
Hamiltonian functions $H\equiv\langle H_{\rm op}\rangle$ and 
$\Phi_j\equiv\langle(\Phi_j)_{\rm op}\rangle$ satisfying the commutation relation
\begin{equation}\label{doubleC}
\big[[(\Phi_j)_{\rm op},H_{\rm op}],H_{\rm op}\big]=0.
\end{equation}
In concrete examples, the operators $H_{\rm op}$ and $(\Phi_j)_{\rm op}$ are usually
unbounded. Therefore, the preceding calculations can only be justified (using the
theory of sesquilinear forms) on subspaces of $\H$ where all the operators are
well-defined. We do not present here the whole theory since much of it, examples
included, is similar to that of \cite{RT10}. We prefer to present a new example
inspired by \cite{GG05}, where all the calculations can be easily justified.

Let $U$ be an isometry in $\H$ admitting a number operator, that is, a selfadjoint
operator $N$ such that $UNU^*=N-1$. Define on $\H$ the bounded selfadjoint operators
$$
\textstyle
\Delta:=\re(U)\equiv\frac12(U+U^*)\qquad\hbox{and}\qquad
S:=\im(U)\equiv\frac1{2i}(U-U^*).
$$
Then we know from \cite[Sec.~3.1]{GG05} that any polynomial in $U$ and $U^*$ leaves
invariant the domain $\dom(N)\subset\H$ of $N$. In particular, the operator
$$
\textstyle
A_0:=\frac12(SN+NS),\quad\dom(A_0):=\dom(N),
$$
is well-defined and symmetric. In fact, it is shown that $A_0$ admits a selfadjoint
extension $A$ with domain $\dom(A)=\dom(NS)$. Furthermore, one has on $\dom(N)$ the
identity $i[A,\Delta]=\Delta^2-1$. So, if we define the Hamiltonian functions
$$
H:\H\to\R,\quad\psi\mapsto\langle\Delta\rangle(\psi)\qquad\hbox{and}\qquad
\Phi:\dom(N)\to\R,\quad\psi\mapsto\langle A\rangle(\psi),
$$
we obtain for each $\psi\in\dom(N)$
$$
(\nabla H)(\psi)
=\{\Phi,H\}(\psi)
=\langle i[A,H]\rangle(\psi)
=\langle\Delta^2-1\rangle(\psi),
$$
and Assumption \ref{AssCom} is verified for each $\psi\in\dom(N)$:
$$
\big\{\{\Phi,H\},H\big\}(\psi)
=\omega\big(X_{\langle\Delta^2-1\rangle}(\psi),X_{\langle\Delta\rangle}(\psi)\big)
=\big\langle i[\Delta^2-1,\Delta]\big\rangle(\psi)
=0.
$$
Now, since the spectrum of $\Delta$ is $[-1,1]$, the operator $1-\Delta^2$ is
positive, so we have the equivalences
$$
\langle\Delta^2-1\rangle(\psi)=0~~\Longleftrightarrow~~
\big\|(1-\Delta^2)^{1/2}\psi\big\|^2=0~~\Longleftrightarrow~~
\psi\in E^\Delta(\{\pm1\}).
$$
Thus,
$$
\Crit(H,\Phi)
\equiv(\nabla H)^{-1}(\{0\})
=\big\{\psi\in\dom(N)\mid\langle\Delta^2-1\rangle(\psi)=0\big\}
=\dom(N)\cap E^\Delta(\{\pm1\}).
$$
On the other hand, the elements $\psi\in\Crit(H)$ satisfy the condition
$$
0=X_H(\psi)=-i\Delta\psi~~\Longleftrightarrow~~\psi\in E^\Delta(\{0\}).
$$
This implies that $\Crit(H)=\{0\}$, since the spectrum of $\Delta$ is purely
absolutely continuous outside the points $\pm1$ \cite[Prop.~3.2]{GG05}. Finally, the
function $T_f$ is given by
$$
T_f=-\langle A\rangle\cdot(\nabla R_f)\big(\langle\Delta^2-1\rangle\big)
$$
on $\dom(N)\setminus\Crit(H,\Phi)$.

Typical examples of operators $\Delta$ and $N$ of the preceding type are Laplacians
and number operators on trees or complete Fock spaces (see \cite{GG05} for details).

\section*{Acknowledgements}

Part of this work was done while R.T.d.A was visiting the Max Planck Institute for
Mathematics in Bonn. He would like to thank Professor Dr. Don Zagier for his kind
hospitality. R.T.d.A also thanks Professor M. Musso for a useful conversation on the
stationary nonlinear Schr\"odinger equation.


\end{document}